\def\I{\mathbb{I}}
\def\P{\mathbb{P}}
\def\E{\mathbb{E}}
\def\BState{\State\hskip-\ALG@thistlm}
\newtheorem{theorem}{Theorem}
\newtheorem{lemma}{Lemma}
\newtheorem{remark}{Remark}
\newtheorem{definition}{Definition}
\title{NeuralFDR: Learning Discovery Thresholds \\from Hypothesis Features}
\author{
Fei Xia$^*$,~~~~Martin J. Zhang\thanks{These authors contributed equally to this work and are listed in alphabetical order.}, ~~~~James Zou$^{\dagger}$, ~~~~David Tse\thanks{These authors contributed equally.}\\
  Stanford University\\
  \texttt{\{feixia,jinye,jamesz,dntse\}@stanford.edu} \\
}
\begin{document}
\maketitle

\begin{abstract}
As datasets grow richer, an important challenge is to leverage the full features in the data to maximize the number of useful discoveries while controlling for false positives. We address this problem in the context of multiple hypotheses testing, where for each hypothesis, we observe a p-value along with a set of features specific to that hypothesis. For example, in genetic association studies, each hypothesis tests the correlation between a variant and the trait. We have a rich set of features for each variant (e.g. its location, conservation, epigenetics etc.) which could inform how likely the variant is to have a true association. However popular empirically-validated testing approaches, such as Benjamini-Hochberg's procedure (BH) and independent hypothesis weighting (IHW), either ignore these features or assume that the features are categorical or uni-variate. We propose a new algorithm, \texttt{NeuralFDR}, which automatically learns a discovery threshold as a function of all the hypothesis features. We parametrize the discovery threshold as a neural network, which enables flexible handling of multi-dimensional discrete and continuous features as well as efficient end-to-end optimization. We prove that \texttt{NeuralFDR} has strong false discovery rate (FDR) guarantees, and show that it makes substantially more discoveries in synthetic and real datasets. Moreover, we demonstrate that the learned discovery threshold is directly interpretable.   
\end{abstract}


\section{Introduction}
In modern data science, the analyst is often swarmed with a large number of hypotheses --- e.g. is a mutation associated with a certain trait or is this ad effective for that section of the users. Deciding which hypothesis to statistically accept or reject is a ubiquitous task. In standard multiple hypothesis testing, each hypothesis is boiled down to one number, a p-value computed against some null distribution, with a smaller value indicating less likely to be null. We have powerful procedures to systematically reject hypotheses while controlling the false discovery rate (FDR) Note that here the convention is that a ``discovery'' corresponds to a ``rejected'' null hypothesis. 

These FDR procedures are widely used but they ignore additional information that is often available in modern applications. Each hypothesis, in addition to the p-value, could also contain a set of features pertinent to the objects being tested in the hypothesis. In the genetic association setting above, each hypothesis tests whether a mutation is correlated with the trait and we have a p-value for this. Moreover, we also have other features about both the mutation (e.g. its location, epigenetic status, conservation etc.) and the trait (e.g. if the trait is gene expression then we have features on the gene). Together these form a feature representation of the hypothesis. This feature vector is ignored by the standard multiple hypotheses testing procedures. 

In this paper, we present a flexible method using neural networks to learn a nonlinear mapping from hypothesis features to a discovery threshold. Popular procedures for multiple hypotheses testing correspond to having one constant threshold for all the hypotheses (BH \cite{benjamini1995controlling}), or a constant for each group of hypotheses (group BH \cite{hu2010false}, IHW \cite{ignatiadis2016data,ignatiadis2017covariate}). Our algorithm takes account of all the features to automatically learn different thresholds for different hypotheses. Our deep learning architecture enables efficient optimization and gracefully handles both continuous and discrete multi-dimensional hypothesis features. Our theoretical analysis shows that we can control false discovery proportion (FDP) with high probability. We provide extensive simulation on synthetic and real datasets to demonstrate that our algorithm makes more discoveries while controlling FDR compared to state-of-the-art methods. 

\begin{figure} 
\centering
\includegraphics[width=0.8\linewidth]{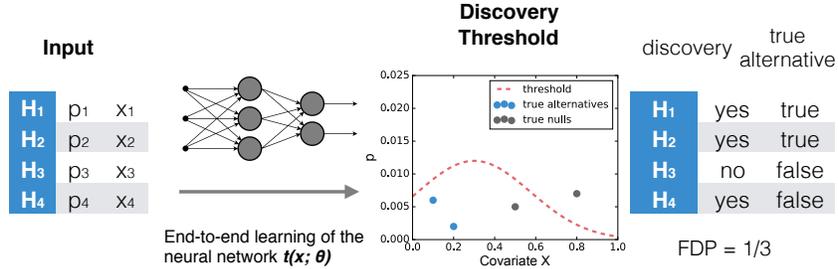}
\caption{NeuralFDR: an end-to-end learning procedure. \label{fig:diag1}}
\end{figure}

\paragraph{Contribution.}
As shown in Fig. \ref{fig:diag1}, we provide \texttt{NeuralFDR}, a practical end-to-end algorithm to the multiple hypotheses testing problem where the hypothesis features can be continuous and multi-dimensional. In contrast, the currently widely-used algorithms either ignore the hypothesis features (BH \cite{benjamini1995controlling}, Storey's BH  \cite{storey2004strong}) or are designed for simple discrete features (group BH \cite{hu2010false}, IHW \cite{ignatiadis2016data}). Our algorithm has several innovative features. We learn a multi-layer perceptron as the discovery threshold and use a \emph{mirroring} technique to robustly estimate false discoveries. We show that \texttt{NeuralFDR} controls false discovery with high probability for independent hypotheses and asymptotically under weak dependence \cite{storey2004strong,hu2010false}, and we demonstrate on both synthetic and real datasets that it controls FDR while making substantially more discoveries. Another advantage of our end-to-end approach is that the learned discovery threshold are directly interpretable. We will illustrate in Sec. \ref{sec:emp} how the threshold conveys biological insights.     

\paragraph{Related works.}
Holm \cite{holm1979simple} investigated the use of p-value weights, where a larger weight suggests that the hypothesis is more likely to be an alternative. 
Benjamini and Hochberg \cite{benjamini1997multiple} considered assigning different losses to different hypotheses according to their importance. 
Some more recent works are \cite{genovese2006false, efron2008simultaneous, hu2010false}. 
In these works, the features are assumed to have some specific forms, either prespecified weights for each hypothesis or the grouping information.  
The more general formulation considered in this paper was purposed quite recently \cite{ignatiadis2016data, li2016multiple, lei2016adapt}. 
It assumes that for each hypothesis, we observe not only a p-value $P_i$ but also a feature $X_i$ lying in some generic space $\mathcal{X}$.
The feature is meant to capture some side information that might bear on the likelihood of a hypothesis to be significant, or on the power of $P_i$ under the alternative, 
but the nature of this relationship is not fully known ahead of time and must be learned from the data.

The recent work most relevant to ours is IHW \cite{ignatiadis2016data}. 
In IHW, the data is grouped into $G$ groups based on the features and the decision threshold is a constant for each group. 
IHW is similar to \texttt{NeuralFDR} in that both methods optimize the parameters of the decision rule to increase the number of discoveries while using cross validation for asymptotic FDR control. 
IHW has several limitations: first, binning the data into $G$ groups can be difficult if the feature space $\mathcal{X}$ is multi-dimensional; 
second, the decision rule, restricted to be a constant for each group, is artificial for continuous features; 
and third, the asymptotic FDR control guarantee requires the number of groups going to infinity, which can be unrealistic. 
In contrast, \texttt{NeuralFDR} uses a neural network to parametrize the decision rule which is much more general and fits the continuous features. 
As demonstrated in the empirical results, it works well with multi-dimensional features. 
In addition to asymptotic FDR control, \texttt{NeuralFDR} also has high-probability false discovery proportion control guarantee with a finite number of hypotheses.

SABHA \cite{li2016multiple} and AdaPT \cite{lei2016adapt} are two recent FDR control frameworks that allow flexible methods to explore the data and compute the feature dependent decision rules. 
The focus there is the framework rather than the end-to-end algorithm as compared to \texttt{NueralFDR}.
For the empirical experiment, SABHA estimates the null proportion using non-parametric methods while AdaPT estimates the distribution of the p-value and the features with a two-group Gamma GLM mixture model and spline regression.
The multi-dimensional case is discussed without empirical validation.
Hence both methods have a similar limitation to IHW in that they do not provide an empirically validated end-to-end approach for multi-dimensional features. 
This issue is addressed in \cite{boca2015regression}, where the null proportion is modeled as a linear combination of some hand-crafted transformation of the features. 
\texttt{NeuralFDR} models this relation in a more flexible way. 

\section{Preliminaries \label{sec:prel}}
We have $n$ hypotheses and each hypothesis $i$ is characterized by a tuple $(P_i, \mathbf{X}_i, H_i)$, where $P_i \in (0,1)$ is the p-value, $\mathbf{X}_i \in \mathcal{X}$ is the hypothesis feature, and $H_i \in \{0,1\}$ indicates if this hypothesis is null ( $H_i=0$) or alternative ( $H_i=1$). 
The p-value $P_i$ represents the probability of observing an equally or more extreme value compared to the testing statistic when the hypothesis is null, and is calculated based on some data different from $\mathbf{X}_i$.
The alternate hypotheses ($H_i = 1$) are the \emph{true signals} that we would like to discover.
A smaller p-value presents stronger evidence for a hypothesis to be alternative. 
In practice, we observe $P_i$ and $\mathbf{X}_i$ but do not know $H_i$. We define the null proportion $\pi_0(\mathbf{x})$ to be the probability that the hypothesis is null conditional on the feature $\mathbf{X}_i=\mathbf{x}$. 
The standard assumption is that under the null ($H_i=0$), the p-value is uniformly distributed in $(0,1)$. Under the alternative ($H_i=1$), we denote the p-value distribution by $f_1(p\vert \mathbf{x})$. In most applications, the p-values under the alternative are systematically smaller than those under the null. A detailed discussion of the assumptions can be found in Sec. \ref{sec:theory}.

The general goal of multiple hypotheses testing is to claim a maximum number of discoveries based on the observations $\{(P_i, \mathbf{X}_i)\}_{i=1}^n$ while controlling the false positives. 
The most popular quantities that conceptualize the false positives are the family-wise error rate (FWER) \cite{dunn1961multiple} and the false discovery rate (FDR) \cite{benjamini1995controlling}.
We specifically consider FDR in this paper.
FDR is the expected proportion of false discoveries, and one closely related quantity, the false discovery proportion (FDP), is the actual proportion of false discoveries.
We note that FDP is the actual realization of FDR. Formally, 
\begin{definition} (FDP and FDR)
For any decision rule $t$, let $D(t)$ and $FD(t)$ be the number of discoveries and the number of false discoveries. 
The false discovery proportion $FDP(t)$ and the false discovery rate $FDR(t)$ are defined as $FDP(t) \triangleq FD(t) / D(t)$ and $FDR(t) \triangleq \E[FDP(t)]$. 
\end{definition}
In this paper, we aim to maximize $D(t)$ while controlling $FDP(t)\leq \alpha$ with high probability. 
This is a stronger statement than those in FDR control literature of controlling FDR under the level $\alpha$.

\paragraph{Motivating example.} 
Consider a genetic association study where the genotype and phenotype (e.g. height) are measured in a population. 
Hypothesis $i$ corresponds to testing the correlation between the variant $i$ and the individual's height. 
The null hypothesis is that there is no correlation, and $P_i$ is the probability of observing equally or more extreme values than the empirically observed correlation conditional on the hypothesis is null $H_i=0$.
Small $P_i$ indicates that the null is unlikely. 
Here $H_i = 1$ (or $0$) corresponds to the variant truly is (or is not) associated with height. 
The features $\mathbf{X}_i$ could include the location, conservation, etc. of the variant. 
Note that $\mathbf{X}_i$ is not used to compute $P_i$, but it could contain information about how likely the hypotheses is to be an alternative.
Careful readers may notice that the distribution of $P_i$ given $\mathbf{X}_i$ is uniform between $0$ and $1$ under the null and $f_1(p\vert\mathbf{x})$ under the alternative, which depends on $\mathbf{x}$.
This implies that $P_i$ and $\mathbf{X}_i$ are \texttt{independent under the null and dependent under the alternative}.

To illustrate why modeling the features could improve discovery power, suppose hypothetically that all the variants truly associated with height reside on a single chromosome $j^*$ and the feature is the chromosome index of each SNP (see Fig. \ref{fig:diag2} (a)). 
Standard multiple testing methods ignore this feature and assign the same discovery threshold to all the chromosomes. 
As there are many purely noisy chromosomes, the p-value threshold must be very small in order to control FDR. 
In contrast, a method that learns the threshold $t(\mathbf{x})$ could learn to assign a higher threshold to chromosome $j^*$ and $0$ to other chromosomes.
As a higher threshold leads to more discoveries and vice versa, this would effectively ignore much of the noise and make more discoveries under the same FDR.

\section{Algorithm Description\label{sec:method}} 
\begin{figure} 
\centering
\subfigure[]{\includegraphics[width=0.3\linewidth]{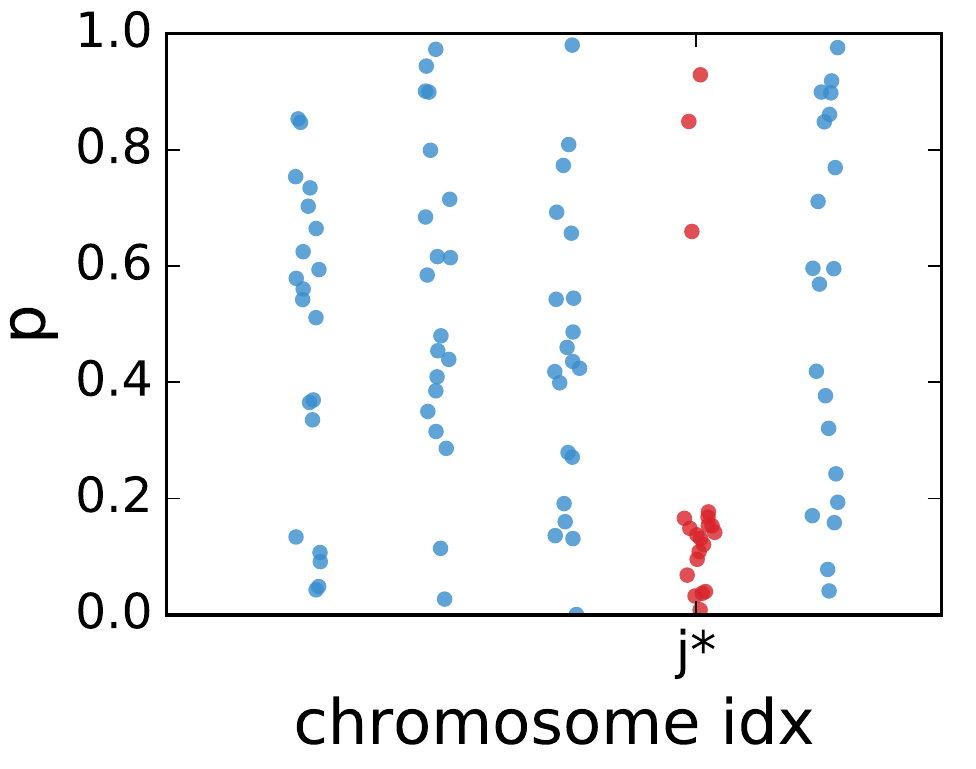}}~~~~~~
\subfigure[]{\includegraphics[width=0.42\linewidth]{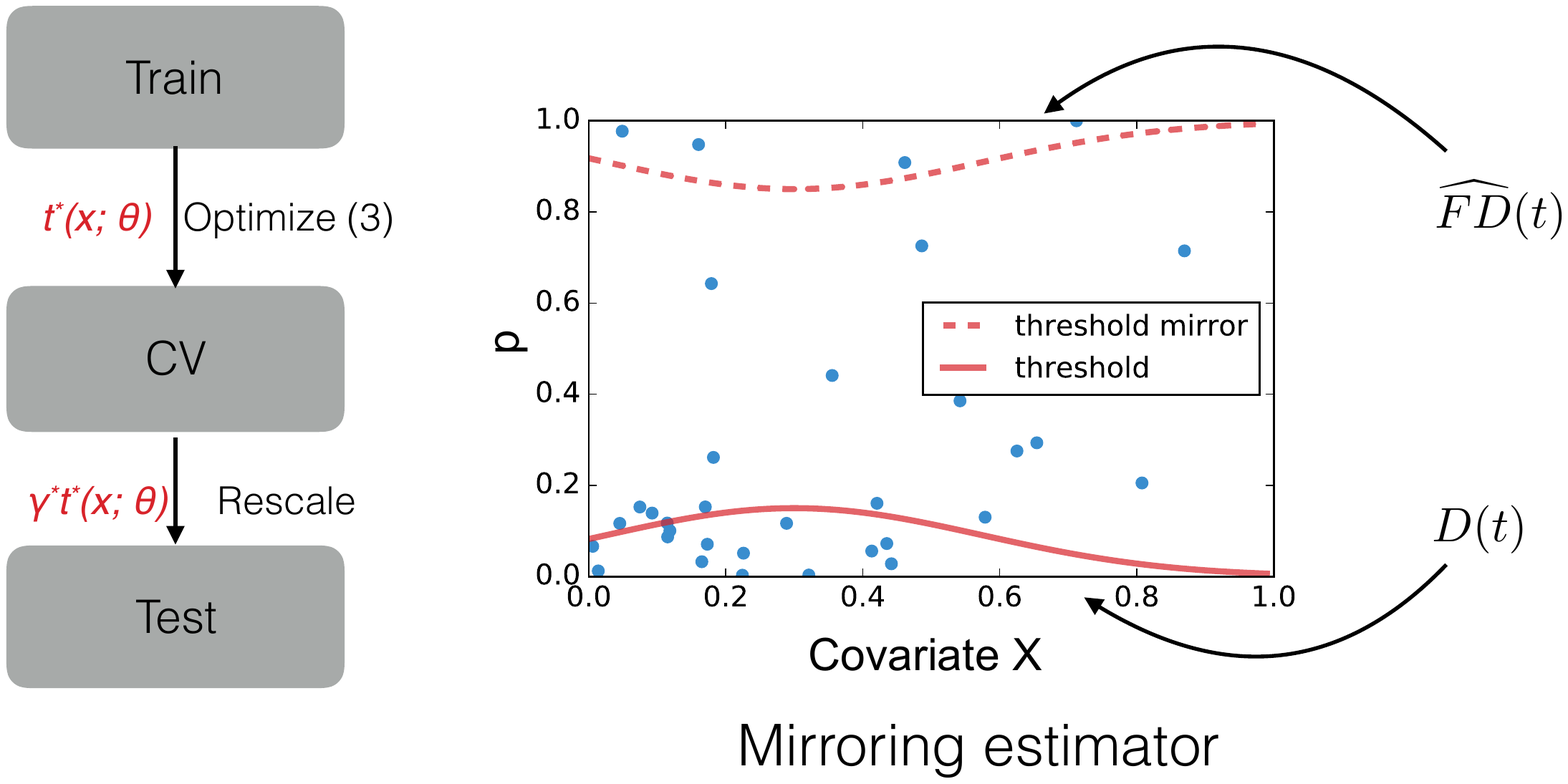}}~~~~~~
\subfigure[]{\includegraphics[width=0.17\linewidth]{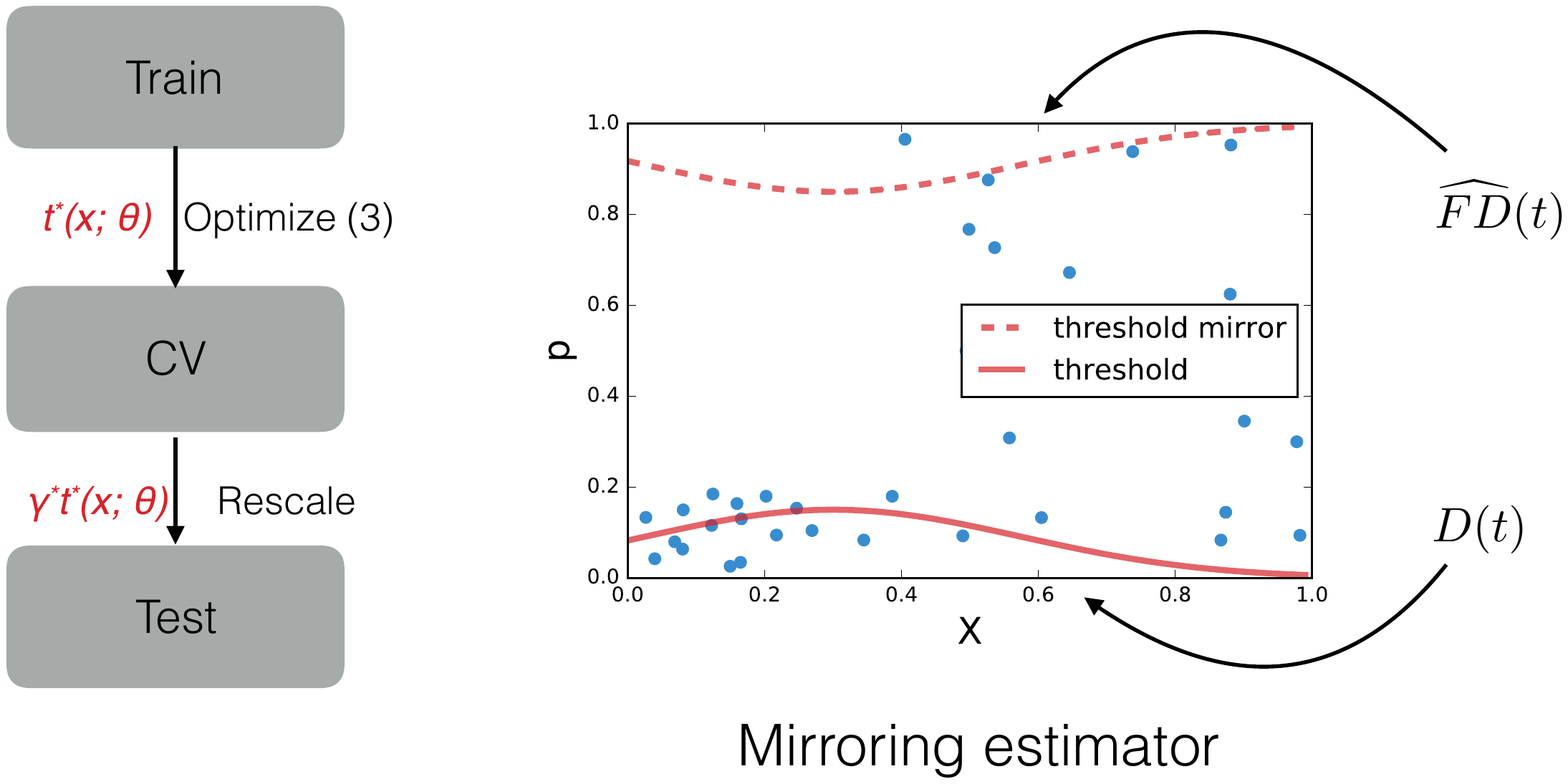}}
\caption{(a) Hypothetical example where small p-values are enriched at chromosome $j^*$. (b) The mirroring estimator. (c) The training and cross validation procedure.  \label{fig:diag2}}
\end{figure}


Since a smaller p-value presents stronger evidence against the null hypothesis, we consider the threshold decision rule without loss of generality. 
As the null proportion $\pi_0(\mathbf{x})$ and the alternative distribution $f_1(p\vert \mathbf{x})$ vary with $\mathbf{x}$, the threshold should also depend on $\mathbf{x}$.
Therefore, we can write the rule as $t(\mathbf{x})$ in general, which claims hypothesis $i$ to be significant if $P_i < t(\mathbf{X}_i)$. 
Let $\I$ be the indicator function. 
For $t(\mathbf{x})$, the number of discoveries $D(t)$ and the number of false discoveries $FD(t)$ can be expressed as $D(t) = \sum_{i=1}^n \I_{\{P_i < t(\mathbf{X}_i)\}}$ and $FD(t) = \sum_{i=1}^n \I_{\{P_i < t(\mathbf{X}_i), H_i=0\}}$. 
Note that computing $FD(t)$ requires the knowledge of $H_i$, which is not available from the observations. 
Ideally we want to solve $t$ for the following problem:
\begin{align}\label{eq:int_pblm}
\text{maximize}_{t}~ D(t),~~s.t.~FDP(t)\leq \alpha.
\end{align}

Directly solving \eqref{eq:int_pblm} is not possible. 
First, without a parametric representation, $t$ can not be optimized.
Second, while $D(t)$ can be calculated from the data, $FD(t)$ can not, which is needed for evaluating $FDP(t)$.  
Third, while each decision rule candidate $t_j$ controls FDP, optimizing over them may yield a rule that overfits the data and loses FDP control. We next address these three difficulties in order. 

\textbf{First}, the representation of the decision rule $t(\mathbf{x})$ should be flexible enough to address different structures of the data. 
Intuitively, to have maximal discoveries, the landscape of $t(\mathbf{x})$ should be similar to that of the alternative proportion $\pi_1(\mathbf{x})$:  $t(\mathbf{x})$ is large in places where the alternative hypotheses abound. 
As discussed in detail in Sec. \ref{sec:emp}, two structures of $\pi_1(\mathbf{x})$ are typical in practice. 
The first is bumps at a few locations, and the second is slopes that vary with $\mathbf{x}$. 
Hence the representation should at least be able to address these two structures. 
In addition, the number of parameters needed for the representation should not grow exponentially with the dimensionality of $\mathbf{x}$.
Hence non-parametric models, such as the spline-based methods or the kernel based methods, are infeasible. 
Take kernel density estimation in $5$D as example. If we let the kernel width be $0.1$, each kernel contains on average $0.001\%$ of the data.
Then we need at least a million alternative hypothesis data to have a reasonable estimate of the landscape of $\pi_1(\mathbf{x})$.
In this work, we investigate the idea of modeling $t(\mathbf{x})$ using a multilayer perceptron (MLP), which has a high expressive power and has a number of parameters that does not grow exponentially with the dimensionality of the features. 
As demonstrated in Sec. \ref{sec:emp}, it can efficiently recover the two common structures, bumps and slopes, and yield promising results in all real data experiments.

\textbf{Second}, although $FD(t)$ can not be calculated from the data, if it can be overestimated by some $\widehat{FD}(t)$, then the corresponding estimate of FDP, namely $\widehat{FDP}(t) = \widehat{FD}(t) / D(t)$, is also an overestimate. 
Then if $\widehat{FDP}(t)\leq \alpha$, then $FDP(t)\leq \alpha$, yielding the desired FDP control. 
Moreover, if $\widehat{FD}(t)$ is close to $FD(t)$, the FDP control is tight. 
Conditional on $\mathbf{X}=\mathbf{x}$, the rejection region of $p$, namely $(0, t(\mathbf{x}))$, contains a mixture of nulls and alternatives. 
As the null distribution $\text{Unif}(0,1)$ is symmetrical w.r.t. $p=0.5$ while the alternative distribution $f_1(p\vert \mathbf{x})$ is highly asymmetrical, the mirrored region $(1-t(\textbf{x}), 1)$ will contain roughly the same number of nulls but very few alternatives. 
Then the number of hypothesis in $(t(\textbf{x}), 1)$ can be a proxy of the number of nulls in $(0, t(\mathbf{x}))$. 
This idea is illustrated in Fig. \ref{fig:diag2} (b) and we refer to this estimator as the \emph{mirroring estimator}. 
This estimator is also used in \cite{lei2016power,arias2017distribution,lei2016adapt}.
\begin{definition} (The mirroring estimator)
For any decision rule $t$, let $C(t) = \{(p,\mathbf{x}): p < t(\mathbf{x})\}$ be the rejection region of $t$ over $(P_i,\mathbf{X}_i)$ and let its mirrored region 
be $C^M(t) = \{(p,\mathbf{x}): p > 1-t(\mathbf{x})\}$.
The mirroring estimator of $FD(t)$ is defined as $\widehat{FD}(t) = \sum_{i} \I_{\{(P_i, X_i) \in C^M(t)\}}$.
\end{definition}
The mirroring estimator overestimates the number of false discoveries in expectation:
\begin{lemma} \label{lm:bia_mir}
(Positive bias of the mirroring estimator)
\begin{align}\label{eq:bia_mir}
\E[\widehat{FD}(t)] - \E[FD(t)]=  \sum_{i=1}^n \P\left[(P_i, \mathbf{X}_i) \in C^M(t), H_i=1\right] \geq 0.
\end{align}
\end{lemma}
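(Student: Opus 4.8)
The plan is to expand both expectations by linearity over the indicator sums, then exploit the distributional symmetry of the null p-values to cancel the null contributions, leaving exactly the alternative term claimed on the right-hand side. First I would write, using $\E[\I_A]=\P[A]$,
\[
\E[\widehat{FD}(t)] = \sum_{i=1}^n \P[(P_i,\mathbf{X}_i)\in C^M(t)], \qquad
\E[FD(t)] = \sum_{i=1}^n \P[(P_i,\mathbf{X}_i)\in C(t),\, H_i=0].
\]
I would then split the first sum according to the value of $H_i$,
\[
\E[\widehat{FD}(t)] = \sum_{i=1}^n \P[(P_i,\mathbf{X}_i)\in C^M(t),\, H_i=0] \;+\; \sum_{i=1}^n \P[(P_i,\mathbf{X}_i)\in C^M(t),\, H_i=1],
\]
so that subtracting $\E[FD(t)]$ reduces the lemma to the per-hypothesis identity
\[
\P[(P_i,\mathbf{X}_i)\in C^M(t),\, H_i=0] = \P[(P_i,\mathbf{X}_i)\in C(t),\, H_i=0].
\]

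The core step is this identity, and it is where the modeling assumptions enter. I would prove it by conditioning on the pair $(\mathbf{X}_i,H_i)$ and using $\P[A]=\E[\P[A\mid \mathbf{X}_i,H_i]]$. On the event $\{H_i=0,\ \mathbf{X}_i=\mathbf{x}\}$ the p-value is uniform on $(0,1)$, so $\P[P_i<t(\mathbf{x})\mid \mathbf{X}_i=\mathbf{x},H_i=0]=t(\mathbf{x})$, and by the symmetry of the uniform density about $1/2$ we also have $\P[P_i>1-t(\mathbf{x})\mid \mathbf{X}_i=\mathbf{x},H_i=0]=t(\mathbf{x})$, provided $t(\mathbf{x})\in[0,1]$ so that the mirrored interval $(1-t(\mathbf{x}),1)$ is well defined. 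Since both conditional probabilities agree pointwise in $\mathbf{x}$, weighting by $\P[H_i=0\mid \mathbf{X}_i=\mathbf{x}]$ and integrating over the law of $\mathbf{X}_i$ yields the identity. The null terms then cancel, and the difference equals $\sum_i \P[(P_i,\mathbf{X}_i)\in C^M(t),\, H_i=1]$, a sum of probabilities, which is $\geq 0$.

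I expect the main obstacle to be bookkeeping rather than analysis: the cancellation is elementary once the null conditional law is pinned down, but one must invoke \emph{uniform under the null} as a statement about the law of $P_i$ conditional on $\{\mathbf{X}_i=\mathbf{x},H_i=0\}$, not marginally, since $H_i$ is itself random and may depend on $\mathbf{X}_i$. Conditioning on $(\mathbf{X}_i,H_i)$ before touching $P_i$ keeps this clean and avoids any implicit independence claim between $H_i$ and $\mathbf{X}_i$ that the setup does not grant; it also makes transparent that no symmetry is assumed of the alternative density $f_1(p\mid\mathbf{x})$, which is precisely why only the alternative term survives.
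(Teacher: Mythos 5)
Your proof is correct and follows essentially the same route as the paper's: decompose $\E[\widehat{FD}(t)]$ by the value of $H_i$, use the uniformity of the null p-value conditional on $(\mathbf{X}_i, H_i=0)$ to equate the probabilities of the rejection region and its mirror on the null event, and observe that the surviving alternative term is a sum of probabilities. Your explicit conditioning on $(\mathbf{X}_i,H_i)$ just spells out the step the paper states directly via the joint density being uniform in $p$.
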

\begin{remark}
In practice, $t(\mathbf{x})$ is always very small and $f_1(p\vert \mathbf{x})$ approaches $0$ very fast as $p\to 1$. 
Then for any hypothesis with $(P_i,\mathbf{X}_i) \in C^M(t)$, $P_i$ is very close to $1$ and hence $\P(H_i=1)$ is very small. In other words, the bias in \eqref{eq:bia_mir} is much smaller than $\E[FD(t)]$. Thus the estimator is accurate.  
In addition, $\widehat{FD}(t)$ and $FD(t)$ are both sums of $n$ terms. Under mild conditions, they concentrate well around their means. Thus we should expect that $\widehat{FD}(t)$ approximates $FD(t)$ well most of the times. We make this precise in Sec. \ref{sec:theory} in the form of the high probability FDP control statement. 
\end{remark}

\textbf{Third}, we use cross validation to address the overfitting problem introduced by optimization. 
To be more specific, we divide the data into $M$ folds. 
For fold $j$, the decision rule $t_j(\mathbf{x}; \bm{\theta})$, before applied on fold $j$, is trained and cross validated on the rest of the data. 
The cross validation is done by rescaling the learned threshold $t_j(\mathbf{x})$ by a factor $\gamma_j$ so that the corresponding mirror estimate $\widehat{FDP}$ on the CV set is $\alpha$. 
This will not introduce much of additional overfitting since we are only searching over a scalar $\gamma$. 
The discoveries in all $M$ folds are merged as the final result.
We note here distinct folds correspond to subsets of hypotheses rather than samples used to compute the corresponding p-values.
This procedure is shown in Fig. \ref{fig:diag2} (c). 
The details of the procedure as well as the FDP control property are also presented in Sec. \ref{sec:theory}. 

\begin{algorithm}
\caption{\texttt{NeuralFDR}\label{alg:neuralFDR}}
\begin{algorithmic}[1]
\State Randomly divide the data $\{(P_i,\mathbf{X}_i)\}_{i=1}^n$ into $M$ folds.
\For{fold $j=1, \cdots, M$} 
\State Let the testing data be fold $j$, the CV data be fold $j'\neq j$, and the training data be the rest.
\State Train $t_j(\mathbf{x};\bm{\theta})$ based on the training data by optimizing 
\begin{align} \label{eq:opt_pblm}
\text{maximize}_{\bm{\theta}} ~~D(t(\bm{\theta})) ~~ s.t.~~ \widehat{FDP} (t^*_j (\bm{\theta})) \leq \alpha.
\end{align}
\State Rescale $t^*_j(\mathbf{x}; \bm{\theta})$ by $\gamma^*_j$ so that the estimated FDP on the CV data $\widehat{FDP} (\gamma^*_j t^*_j (\bm{\theta})) = \alpha$.
\State Apply $\gamma^*_j t^*_j (\bm{\theta})$ on the data in fold $j$ (the testing data). 
\EndFor
\State Report the discoveries in all $M$ folds. 
\end{algorithmic}
\end{algorithm}

The proposed method \texttt{NeuralFDR} is summarized as Alg. \ref{alg:neuralFDR}. 
There are two techniques that enabled robust training of the neural network. 
First, to have non-vanishing gradients, the indicator functions in \eqref{eq:opt_pblm} are substituted by sigmoid functions with the intensity parameters automatically chosen based on the dataset. 
Second, the training process of the neural network may be unstable if we use random initialization.
Hence, we use an initialization method called the $k$-cluster initialization:
1) use $k$-means clustering to divide the data into $k$ clusters based on the features;
2) compute the optimal threshold for each cluster based on the optimal group threshold condition (\eqref{eq:opt_cond} in Sec. \ref{sec:theory});
3) initialize the neural network by training it to fit a smoothed version of the computed thresholds. See Supp. Sec. \ref{sec:S_imp} for more implementation details.

\section{Empirical Results\label{sec:emp}}
We evaluate our method using both simulated data and two real-world datasets\footnote{We released the software at \url{https://github.com/fxia22/NeuralFDR}}.
The implementation details are in Supp. Sec. \ref{sec:S_imp}. 
We compare \texttt{NeuralFDR} with three other methods: BH procedure (BH) \cite{benjamini1995controlling}, Storey’s BH procedure (SBH) with threshold $\lambda=0.4$ \cite{storey2004strong}, and Independent Hypothesis Weighting (IHW) with number of bins and folds set as default \cite{ignatiadis2016data}.
BH and SBH are two most popular methods without using the hypothesis features and IHW is the state-of-the-art method that utilizes hypothesis features.
For IHW, in the multi-dimensional feature case, $k$-means is used to group the hypotheses. In all experiments, $k$ is set to 20 and the group index is provided to IHW as the hypothesis feature.
Other than the FDR control experiment, we set the nominal FDR level $\alpha=0.1$.



\paragraph{Simulated data.}
We first consider DataIHW, the simulated data in the IHW paper ( Supp. 7.2.2 \cite{ignatiadis2016data}). 
Then, we use our own data that are generated to have two feature structures commonly seen in practice, the bumps and the slopes. 
For the bumps, the alternative proportion $\pi_1(\mathbf{x})$ is generated from a Gaussian mixture (GM) to have a few peaks with abundant alternative hypotheses. 
For the slopes, $\pi_1(\mathbf{x})$ is generated linearly dependent with the features. 
After generating $\pi_1(\mathbf{x})$, the p-values are generated following a beta mixture under the alternative and uniform $(0,1)$ under the null.
We generated the data for both 1D and 2D cases, namely 1DGM, 2DGM, 1Dslope, 2Dslope.
For example, Fig. \ref{fig:threshold_exp} (a) shows the alternative proportion of 2Dslope. 
In addition, for the high dimensional feature scenario, we generated a 5D data, 5DGM, which contains the same alternative proportion as 2DGM with $3$ addition non-informative directions. 

We first examine the FDR control property using DataIHW and 1DGM. 
Knowing the ground truth, we plot the FDP (actual FDR) over different values of the nominal FDR $\alpha$ in Fig. \ref{fig:fdr_exp}. 
For a perfect FDR control, the curve should be along the $45$-degree dashed line.
As we can see, all the methods control FDR. 
\texttt{NeuralFDR} controls FDR accurately while IHW tends to make overly conservative decisions. 
Second, we visualize the learned threshold by both \texttt{NeuralFDR} and IWH. As mentioned in Sec. \ref{sec:method}, to make more discoveries, the learned threshold should roughly have the same shape as $\pi_1(\mathbf{x})$.
The learned thresholds of \texttt{NeuralFDR} and IHW for 2Dslope are shown in Fig. \ref{fig:fdr_exp} (b,c).
As we can see, \texttt{NeuralFDR} well recovers the slope structure while IHW fails to assign the highest threshold to the bottom right block.
IHW is forced to be piecewise constant while \texttt{NeuralFDR} can learn a smooth threshold, better recovering the structure of $\pi_1(\mathbf{x})$. In general, methods that partition the hypotheses into discrete groups would not scale for higher-dimensional features. In Appendix~\ref{sec:S_emp}, we show that \texttt{NeuralFDR} is also able to recover the correct threshold for the Gaussian signal. Finally, we report the total numbers of discoveries in Tab. \ref{tab:simu}.

In addition, we ran an experiment with dependent p-values with the same dependency structure as Sec. 3.2 in \cite{ignatiadis2016data}. We call this dataset DataIHW(WD). The number of discoveries are shown in  Tab. \ref{tab:simu}. \texttt{NeuralFDR} has the actual FDP $9.7\%$ while making more discoveries than SBH and IHW. This empirically shows that \texttt{NeuralFDR} also works for weakly dependent data.

All numbers are averaged over $10$ runs of the same simulation setting. 
We can see that \texttt{NeuralFDR} outperforms IHW in all simulated datasets. Moreover, it outperforms IHW by a large margin multi-dimensional feature settings. 

\begin{figure}
\centering
\subfigure[]{\includegraphics[width=0.45\linewidth]{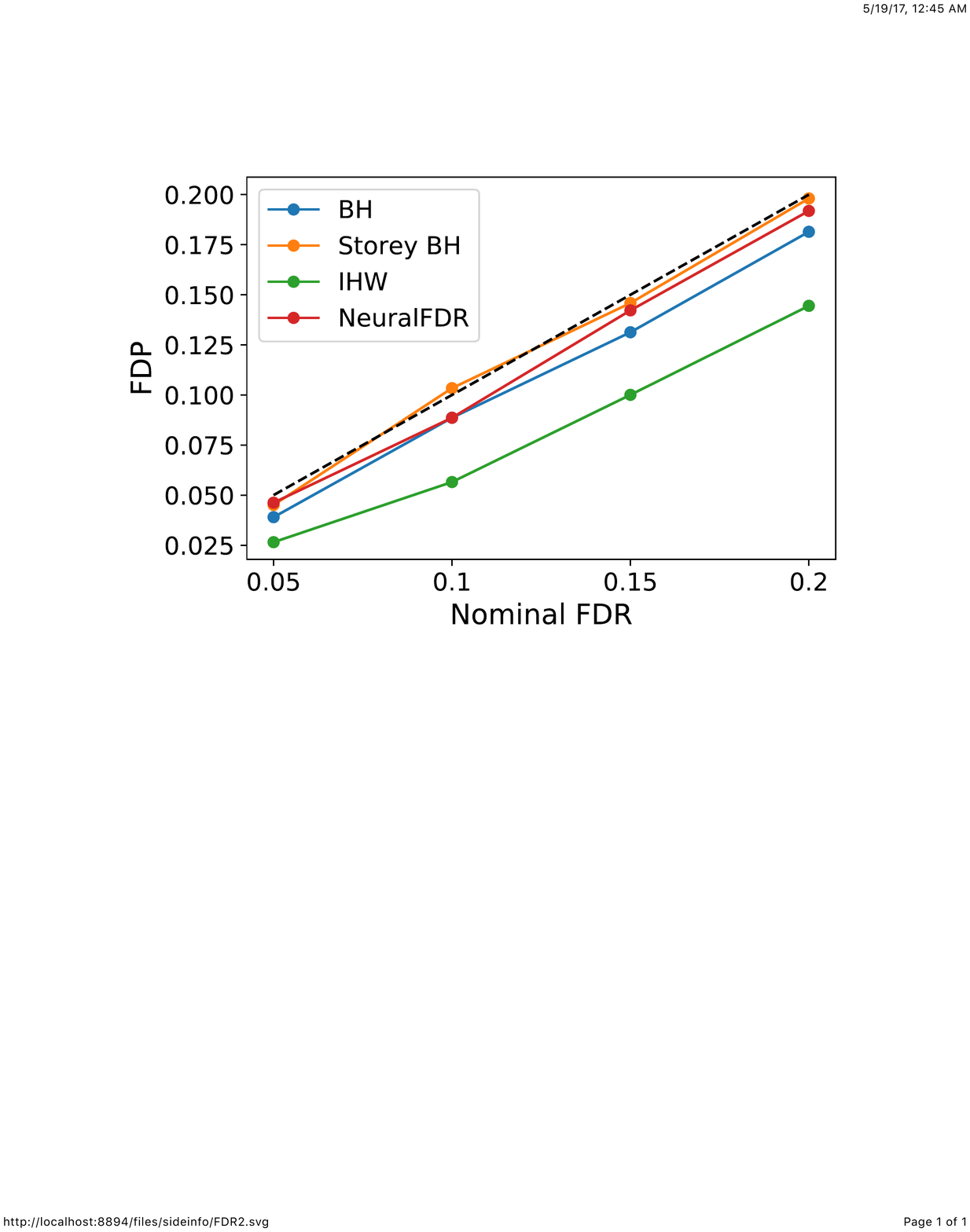}}\quad\quad\quad
\subfigure[]{\includegraphics[width=0.45\linewidth]{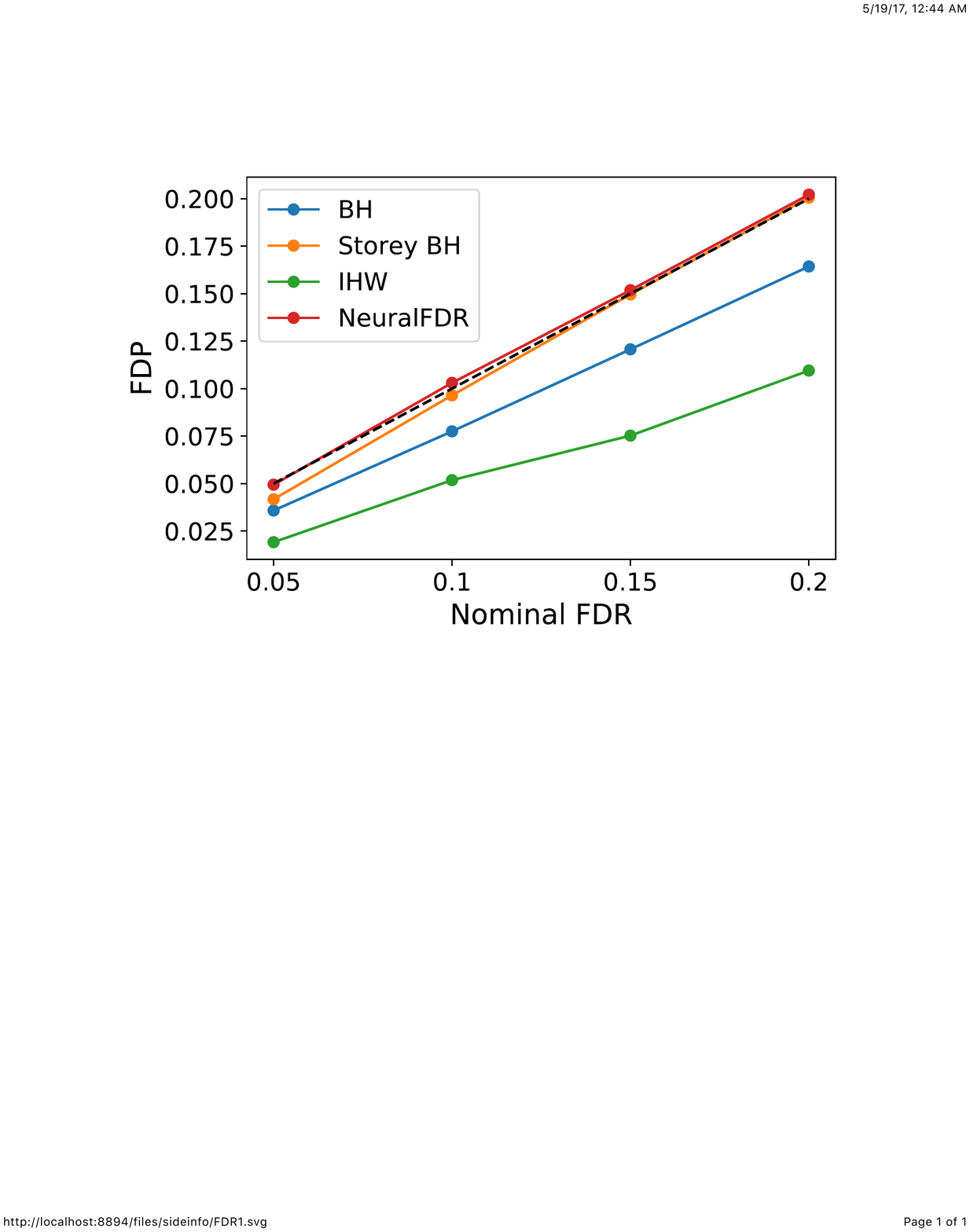}}
\caption{FDP for (a) DataIHW and (b) 1DGM. Dashed line indicate 45 degrees, which is optimal.
\label{fig:fdr_exp}}
\end{figure}

\begin{figure}
\centering
\subfigure[Actual alternative proportion for 2Dslope.]{\includegraphics[width=0.3\linewidth]{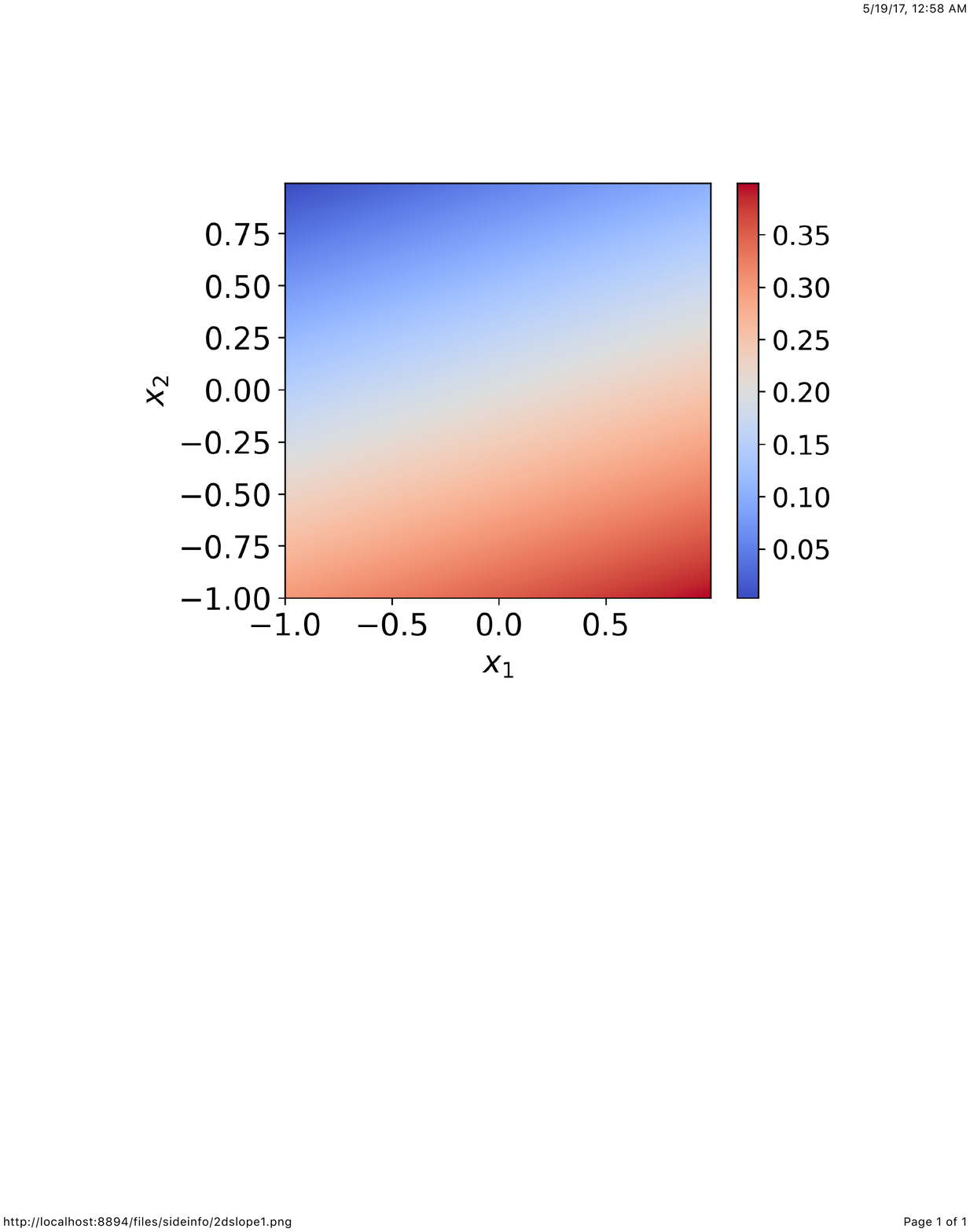}} \quad
\subfigure[\texttt{NeuralFDR}'s learned threshold.]{\includegraphics[width=0.3\linewidth]{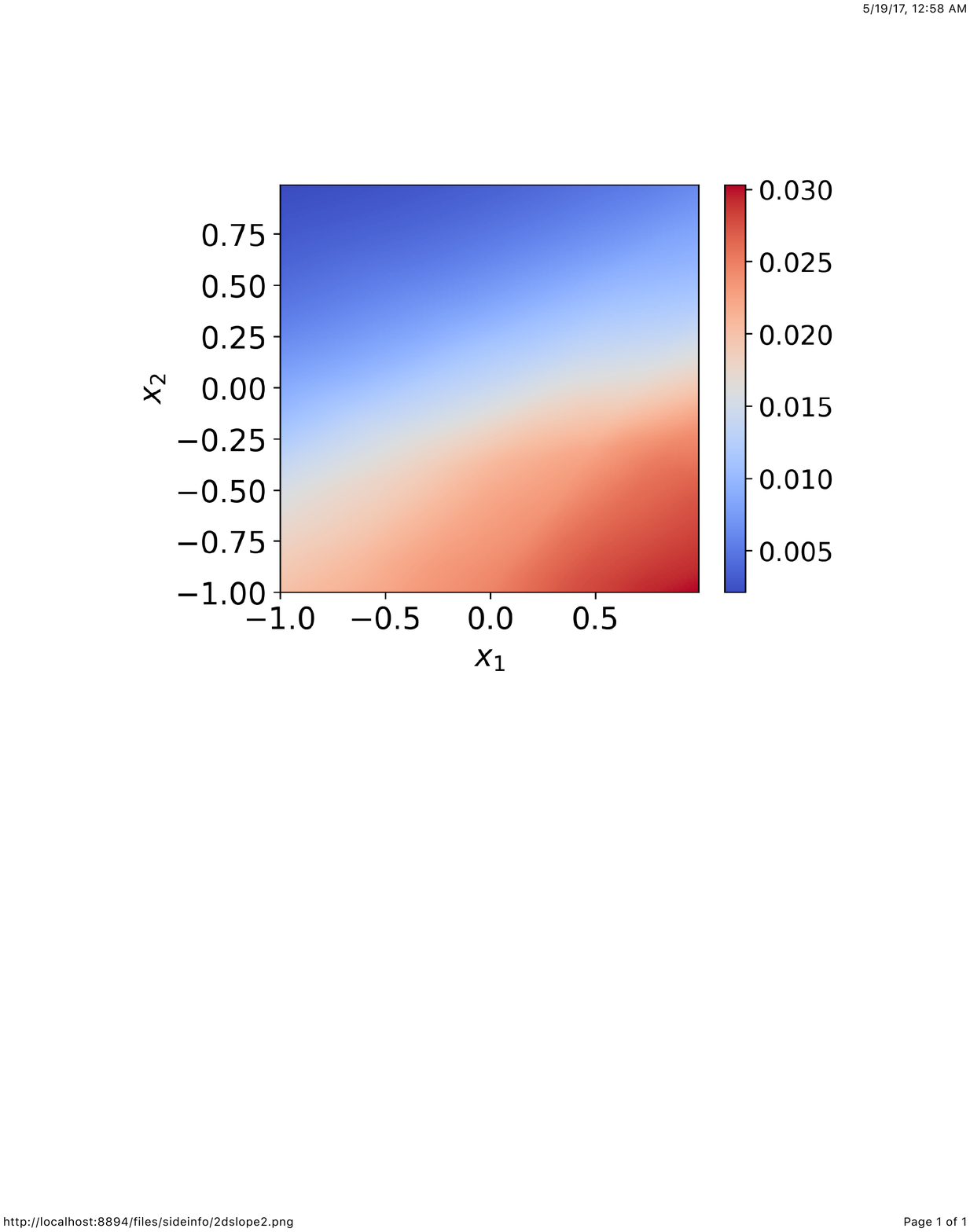}} \quad
\subfigure[IHW's learned threshold]{\includegraphics[width=0.3\linewidth]{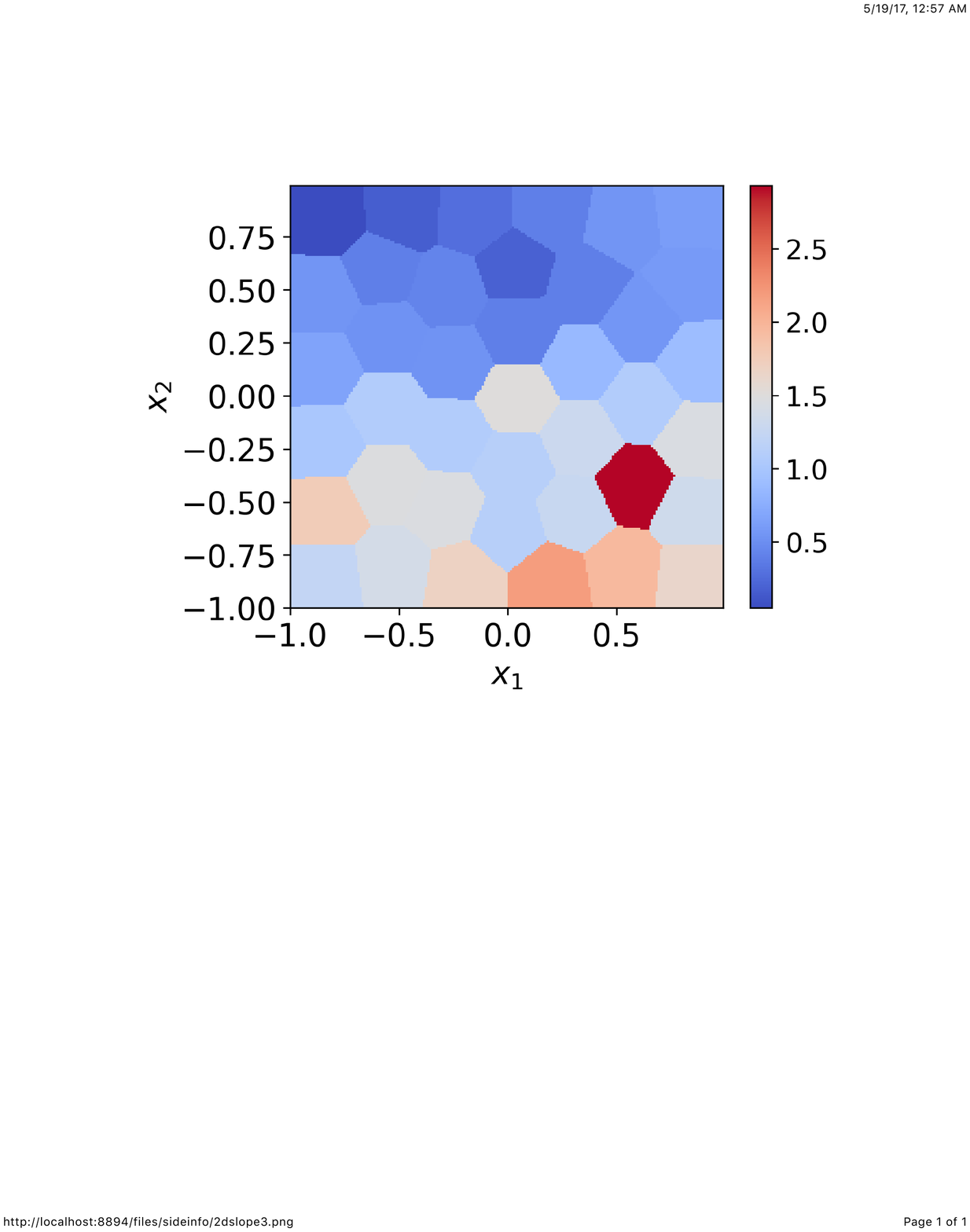}}\\
\subfigure[\texttt{NeuralFDR}'s learned threshold for Airway {\bf log count}.]{\includegraphics[width=0.3\linewidth]{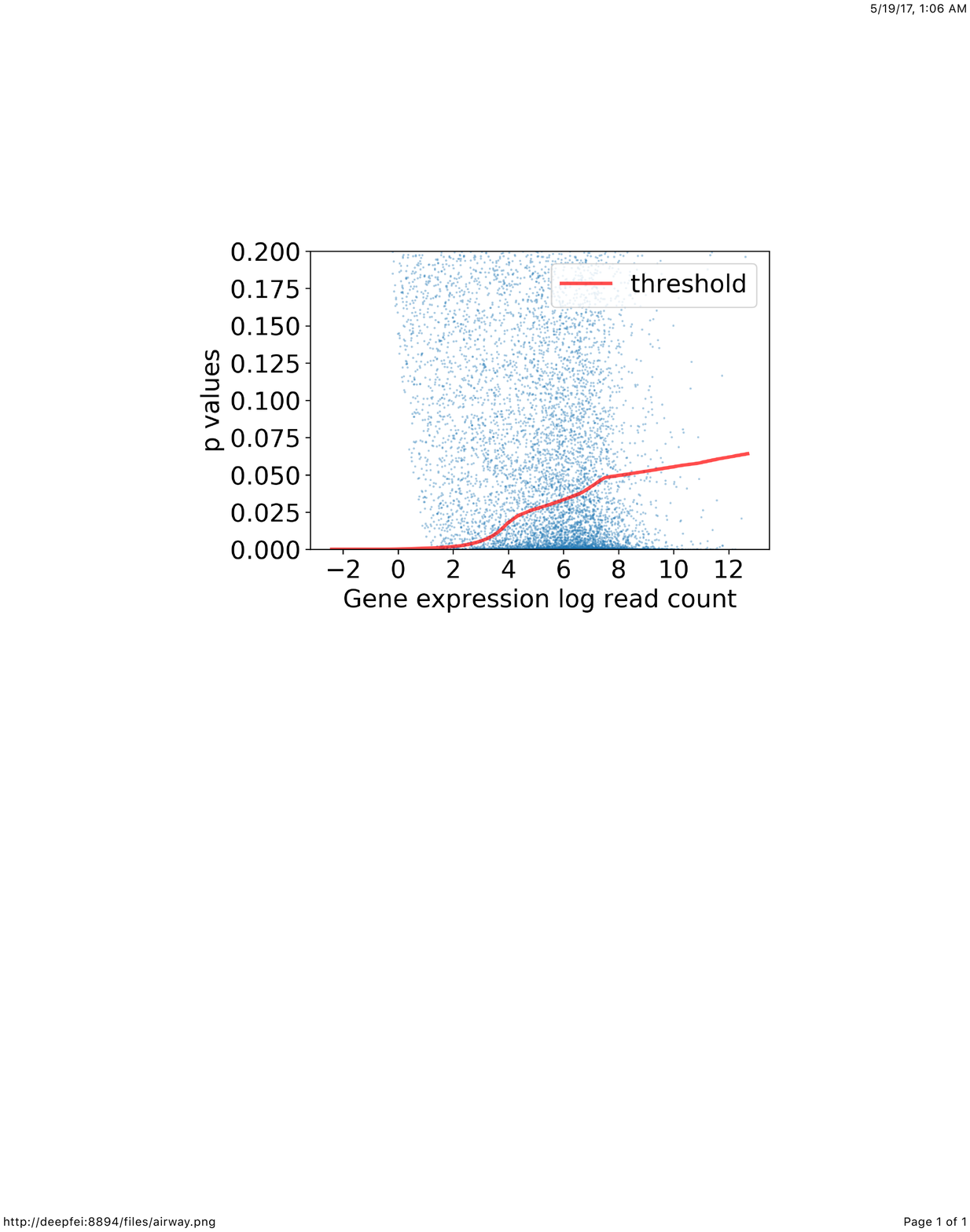}} 
\subfigure[\texttt{NeuralFDR}'s learned threshold for GTEx { \bf log distance}.]{\includegraphics[width=0.3\linewidth]{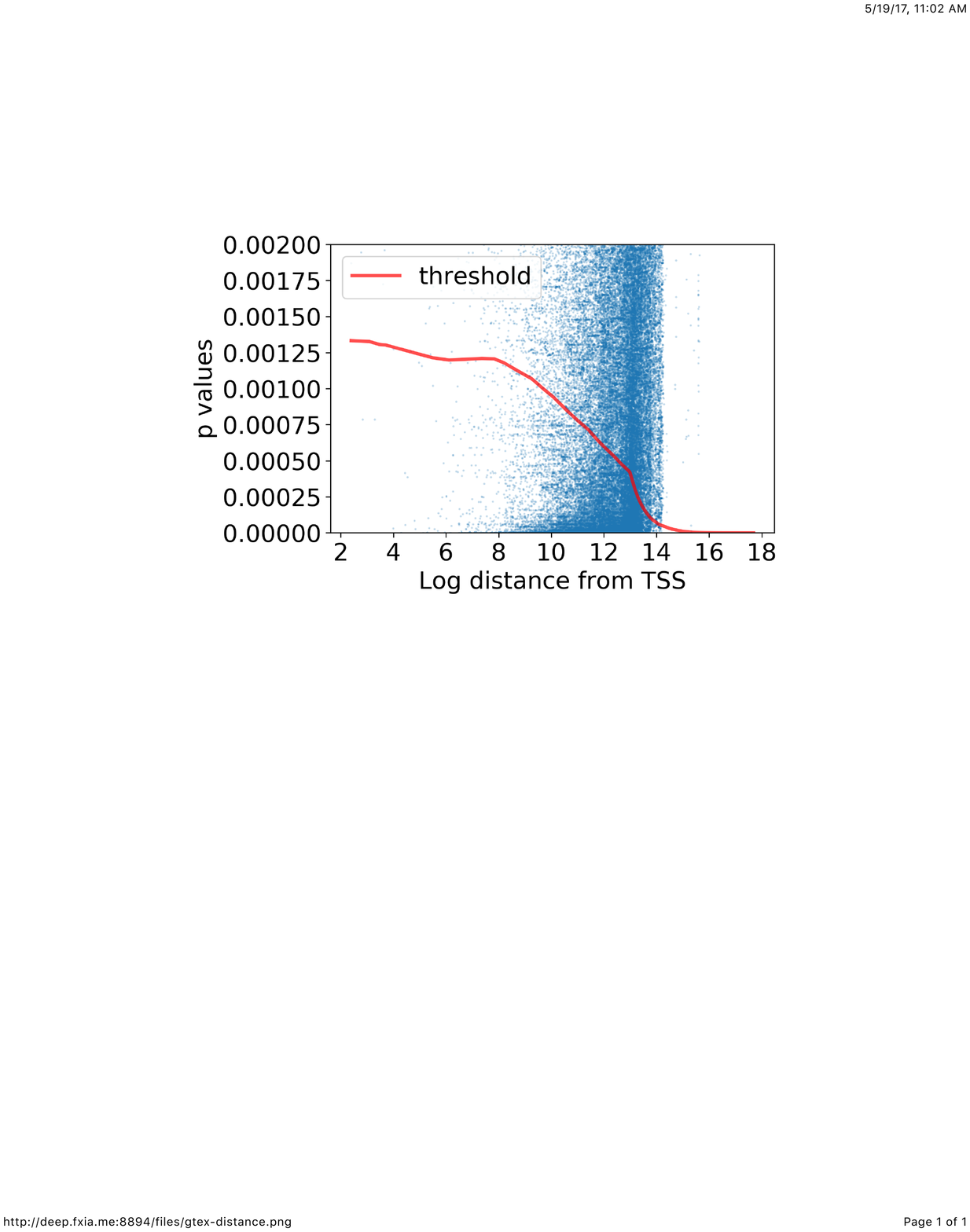}}
\subfigure[\texttt{NeuralFDR}'s learned threshold for GTEx { \bf expression level. }]{\includegraphics[width=0.3\linewidth]{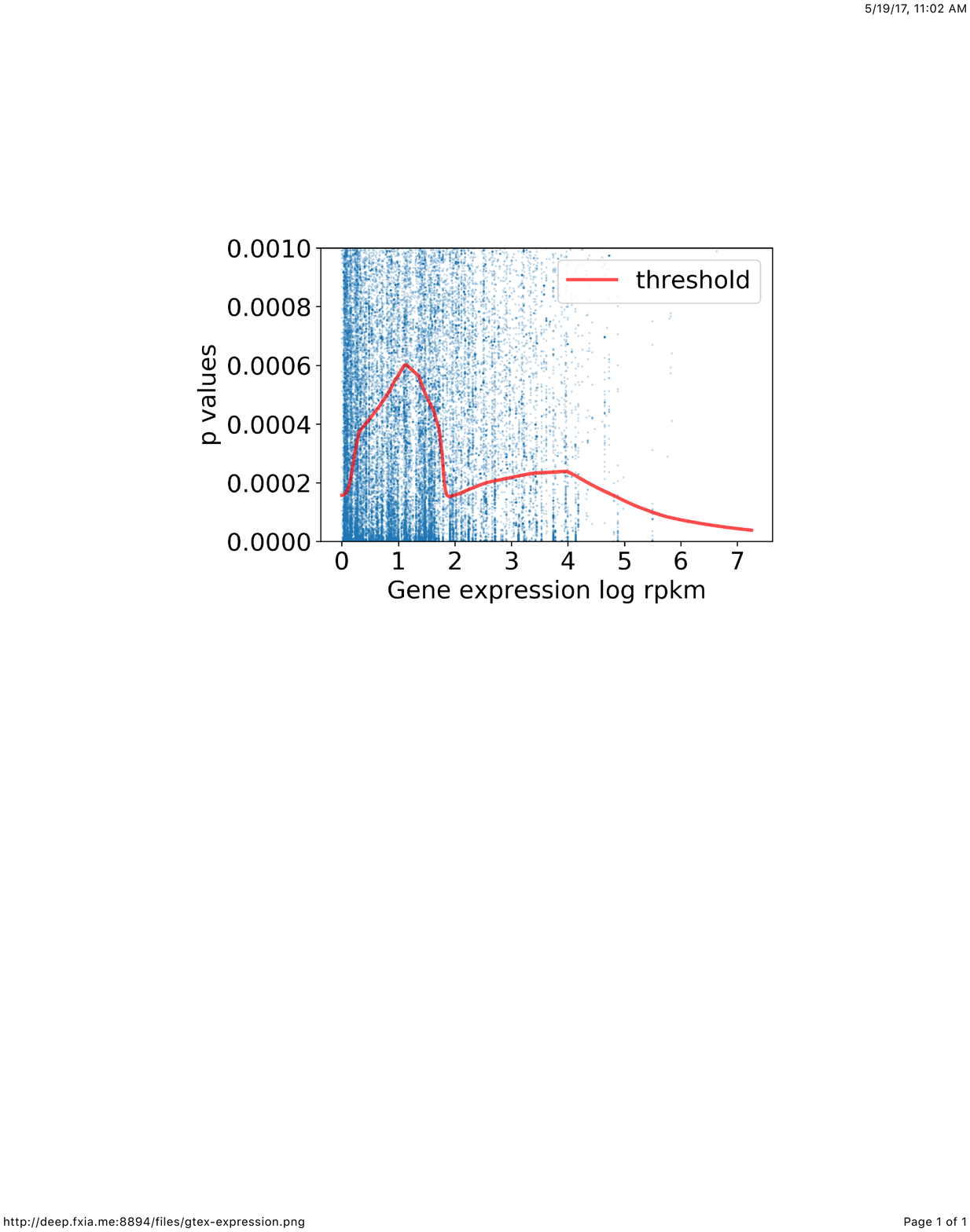}}
\caption{{\bf (a-c)} Results for 2Dslope: (a) the alternative proportion for 2Dslope; (b) \texttt{NeuralFDR}'s learned threshold; (c) IHW's learned threshold.  
{ \bf (d-f)}: Each dot corresponds to one hypothesis. The red curves shows the learned threshold by \texttt{NeuralFDR}:
(d) for {\bf log count} for airway data;
(e) for { \bf log distance} for GTEx data;
(f) for { \bf expression level } for GTEx data. 
\label{fig:threshold_exp}}
\end{figure}

\begin{table}
\centering
\caption{Simulated data: \# of discoveries and gain over BH at FDR = 0.1. }
\label{tab:simu}
\begin{tabular}{l|l|l|l}
         & DataIHW & DataIHW(WD)&1D GM \\
\hline
BH       &  2259&  6674 &   8266        \\
SBH      &  2651(+17.3\%) & 7844(+17.5\%) &    9227(+11.62\%)  \\
IHW      & 5074(+124.6\%)  & 10382(+55.6\%) & 11172(+35.2\%)    \\
NeuralFDR &  {\bf 6222(+175.4\%)} & {\bf 12153(+82.1\%)} &   {\bf 14899(+80.2\%)}     
\end{tabular}
~
\begin{tabular}{l|l|l|l|l}
         & 1D slope & 2D GM & 2D slope & 5D GM \\
\hline
BH       &    11794    &  9917  & 8473  &  9917 \\
SBH      &   13593(+15.3\%)    &  11334(+14.2\%)  & 9539(+12.58\%)  & 11334(+14.28\%)\\
IHW      &  12658(+7.3\%)    &  12175(+22.7\%)  & 8758(+3.36\%) & 11408(+15.0\%) \\
NeuralFDR &  {\bf 15781(+33.8\%)}     &  {\bf 18844(+90.0\%)}  & {\bf 10318(+21.7\%)} & {\bf 18364(+85.1\%)}  
\end{tabular}

\end{table}

\begin{table}
\centering
\caption{Real data: \# of discoveries at FDR = 0.1.}
\label{tab:realdata}
\begin{tabular}{l|l|l|l}
         & Airway & GTEx-dist & GTEx-exp \\
\hline
BH       &      4079        &   29348 & 29348  \\
SBH      &       4038(-1.0\%)       &   29758(+1.4\%)   & 29758(+1.4\%)  \\
IHW      &     4873(+19.5\%)     &   35771(+21.9\%)   & 32195(+9.7\%)  \\
NeuralFDR &     {\bf 6031(+47.9\%)}       &   {\bf 36127(+23.1\%)}  & {\bf 32214(+9.8\%)} 
\end{tabular}

\begin{tabular}{l|l|l|l}
         & GTEx-PhastCons  & GTEx-2D & GTEx-3D\\
\hline
BH       &29348&29348 & 29348   \\
SBH      &29758(+1.4\%) &29758(+1.4\%) & 29758(+1.4\%)\\
IHW      &30241(+3.0\%)& 35705(+21.7\%) & 35598(+21.3\%) \\
NeuralFDR &  {\bf 30525(+4.0\%)} & {\bf 37095(+26.4\%)} & {\bf 37195(+26.7\%)}
\end{tabular}

\end{table}

\paragraph{Airway RNA-Seq data.}
Airway data \cite{himes2014rna} is a RNA-Seq dataset that contains $n=33469$ genes and aims to identify glucocorticoid responsive (GC) genes that modulate cytokine function in airway smooth muscle cells.
The p-values are obtained by a standard two-group differential analysis using DESeq2 \cite{love2014moderated}. 
We consider the log count for each gene as the hypothesis feature. 
As shown in the first column in Tab. \ref{tab:realdata}, \texttt{NeuralFDR} makes $800$ more discoveries than IHW. 
The learned threshold by \texttt{NeuralFDR} is shown in Fig. \ref{fig:threshold_exp} (d).
It increases monotonically with the log count, capturing the positive dependency relation.
Such learned structure is interpretable:
low count genes tend to have higher variances, usually dominating the systematic difference between the two conditions;
on the contrary, it is easier for high counts genes to show a strong signal for differential expression \cite{love2014moderated,ignatiadis2016data}. 

\paragraph{GTEx data.} A major component of the GTEx \cite{gtex2015genotype} study is to quantify expression quantitative trait loci (eQTLs) in human tissues. 
In such an eQTL analysis, each pair of single nucleotide polymorphism (SNP) and nearby gene forms one hypothesis. 
Its p-value is computed under the null hypothesis that the SNP's genotype is not correlated with the gene expression.
We obtained all the GTEx p-values from chromosome $1$ in a brain tissue (interior caudate), corresponding to $10,623,893$ SNP-gene combinations. 
In the original GTEx eQTL study, no features were considered in the FDR analysis,
corresponding to running the standard BH or SBH on the p-values.
However, we know many biological features affect whether a SNP is likely to be a true eQTL; i.e. these features could vary the alternative proportion $\pi_1(\mathbf{x})$ and accounting for them could increase the power to discover true eQTL's while guaranteeing that the FDR remains the same. 
For each hypothesis, we generated three features: 1) the distance (GTEx-dist) between the SNP and the gene (measured in log base-pairs) ; 
2) the average expression (GTEx-exp) of the gene across individuals (measured in log rpkm); 
3) the evolutionary conservation measured by the standard PhastCons scores (GTEx-PhastCons). 

The numbers of discoveries are shown in Tab. \ref{tab:realdata}. 
For GTEx-2D, GTEx-dist and GTEx-exp are used.
For NeuralFDR, the number of discoveries increases as we put in more and more features, indicating that it can work well with multi-dimensional features. 
For IHW, however, the number of discoveries decreases as more features are incorporated. 
This is because when the feature dimension becomes higher, each bin in IHW will cover a larger space, decreasing the resolution of the piecewise constant function, 
preventing it from capturing the informative part of the feature. 

The learned discovery thresholds of \texttt{NeuralFDR} are directly interpretable and match prior biological knowledge. Fig. \ref{fig:threshold_exp} (e) shows that the threshold is higher when SNP is closer to the gene. This allows more discoveries to be made among nearby SNPs, which is desirable since we know there most of the eQTLs tend to be in cis (i.e. nearby) rather than trans (far away) from the target gene \cite{gtex2015genotype}. Fig. \ref{fig:threshold_exp} (f) shows that the \texttt{NeuralFDR} threshold for gene expression decreases as the gene expression becomes large. This also confirms known biology: the highly expressed genes tend to be more housekeeping genes which are less variable across individuals and hence have fewer eQTLs \cite{gtex2015genotype}. Therefore it is desirable that \texttt{NeuralFDR} learns to place less emphasis on these genes. We also show that \texttt{NeuralFDR} learns to give higher threshold to more conserved variants in Supp. Sec. \ref{sec:S_emp}, which also matches biology.



\section{Theoretical Guarantees \label{sec:theory}}
We assume the tuples $\{(P_i, \mathbf{X}_i, H_i)\}_{i=1}^n$ are i.i.d. samples from an empirical Bayes model:
\begin{align}\label{eq:model}
\mathbf{X}_i \overset{i.i.d.}{\sim} \mu(\mathbf{X}),~~\left[H_i \vert \mathbf{X}_i = \mathbf{x} \right]\sim \text{Bern}(1-\pi_0(\mathbf{x})),
 \left\{ \begin{array}{ccc}
\left[P_i \vert H_i=0, \mathbf{X}=\mathbf{x}\right] & \sim & \text{Unif}(0,1)\\
\left[P_i \vert H_i=1, \mathbf{X}=\mathbf{x}\right] & \sim & f_1(p\vert \mathbf{x})\\
\end{array}\right.
\end{align}
The features $\mathbf{X}_i$ are drawn i.i.d. from some unknown distribution $\mu(\mathbf{x})$. 
Conditional on the feature $\mathbf{X}_i = \mathbf{x}$, hypothesis $i$ is null with probability $\pi_0 (\mathbf{x})$ and is alternative otherwise. 
The conditional distributions of p-values are $\text{Unif}(0,1)$ under the null and $f_1(p\vert \mathbf{x})$ under the alternative. 

\paragraph{FDR control via cross validation.}
The cross validation procedure is described as follows. 
The data is divided randomly into $M$ folds of equal size $m=n/M$.
For fold $j$, let the testing set $\mathcal{D}_{te}(j)$ be itself,
the cross validation set $\mathcal{D}_{cv}(j)$ be any other fold, and the training set $\mathcal{D}_{tr}(j)$ be the remaining.
The size of the three are $m$, $m$, $(M-2)m$ respectively.
For fold $j$, suppose at most $L$ decision rules are calculated based on the training set, namely $t_{j1}, \cdots, t_{jL}$. 
Evaluated on the cross validation set, let $l^*$-th rule be the rule with most discoveries among rules that satisfies 1) its mirroring estimate $\widehat{FDP}(t_{jl})\leq \alpha$; 2) $D(t_{jl}) / m>c_0$, for some small constant $c_0>0$. 
Then, $t_{jl^*}$ is selected to apply on the testing set (fold $j$).
Finally, discoveries from all folds are combined. 

The FDP control follows a standard argument of cross validation. 
Intuitively, the FDP of the rules $\{t_{jl}\}_{l=1}^L$ are estimated based on $\mathcal{D}_{cv}(j)$, a dataset independent of the training set.
Hence there is no overfitting and the overestimation property of the mirroring estimator, as in Lemma \ref{lm:bia_mir}, is statistical valid, leading to a conservative decision that controls FDP. This is formally stated as below. 

\begin{theorem} \label{thrm:FDP_ctrl}
(FDP control) Let $M$ be the number of folds and let $L$ be the maximum number of decision rule candidates evaluated by the cross validation set.
Then with probability at least $1-\beta$, the overall FDP is less than $(1+\Delta)\alpha$, where $\Delta = O \left(\sqrt{\frac{M}{\alpha n} \log \frac{ML}{\beta}}\right)$.
\end{theorem}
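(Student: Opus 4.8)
The plan is to reduce the global claim to a uniform per-fold bound and then aggregate. Fix a fold $j$ and let $FD_j$, $D_j$ denote the true false discoveries and discoveries on the testing fold under the selected rule $t_{jl^*}$, with $\widehat{FD}^{\mathrm{cv}}_j, D^{\mathrm{cv}}_j$ the mirror estimate and discovery count of the same rule on the CV fold. The key structural observation is that the candidates $t_{j1},\dots,t_{jL}$ are functions of the training fold only and the index $l^*$ is chosen from the CV fold, so $t_{jl^*}$ is a random function of the training and CV data that is independent of the testing fold $j$; conditioning on the training fold fixes the candidates, after which the CV and testing folds are independent i.i.d. samples from \eqref{eq:model}. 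I will show that, simultaneously over all $M$ folds, $FD_j \le (1+\Delta)\alpha\, D_j$; summing numerators and denominators then yields the overall $FDP = (\sum_j FD_j)/(\sum_j D_j) \le (1+\Delta)\alpha$, because a common multiplicative bound is preserved under aggregation of ratios.

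For the per-fold bound I route the CV selection constraint to the testing outcome through the population (expected) quantities, using the positive bias of the mirroring estimator. For any fixed rule, $\E[FD]$, $\E[\widehat{FD}]$ and $\E[D]$ are the same on the CV and testing folds, and Lemma \ref{lm:bia_mir} gives $\E[FD] \le \E[\widehat{FD}]$. Up to concentration deviations, the chain then reads
\[
FD_j \;\lesssim\; \E[FD] \;\le\; \E[\widehat{FD}] \;\lesssim\; \widehat{FD}^{\mathrm{cv}}_j \;\le\; \alpha\, D^{\mathrm{cv}}_j \;\lesssim\; \alpha\, D_j,
\]
where the second relation is Lemma \ref{lm:bia_mir}, the fourth is the selection constraint $\widehat{FDP}(t_{jl^*}) \le \alpha$ enforced on the CV fold, and the three $\lesssim$ steps are concentration of $FD$, $\widehat{FD}$ and $D$ around their means on the respective folds.

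The deviation terms are controlled by a variance-aware (Bernstein or multiplicative Chernoff) inequality for a sum of $m = n/M$ independent indicators, applied conditionally on the training fold and union-bounded over the $L$ candidates per fold and over the $M$ folds at total failure probability $\beta$; this produces the $\log(ML/\beta)$ factor. A Hoeffding bound would be too crude here: the false-discovery count has mean of order $\alpha D$, hence standard deviation of order $\sqrt{\alpha D}$, and demanding that this be absorbed into the allowed slack $\Delta\alpha D$ forces $\Delta \gtrsim \sqrt{\log(ML/\beta)/(\alpha D)}$. Selection condition (2), $D(t_{jl})/m > c_0$, lower-bounds the denominator by $\Theta(m) = \Theta(n/M)$, so $\Delta \gtrsim \sqrt{\tfrac{M}{\alpha n}\log\tfrac{ML}{\beta}}$, matching the stated rate. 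The main obstacle is precisely this control of the ratio $FD/D$: one must keep the denominator away from zero (the sole role of $c_0$, hidden in the $O(\cdot)$ constant) and must use the variance-aware tail to retain the $1/\sqrt{\alpha}$ factor, while the data-dependence of $l^*$ is neutralized by the uniform union bound over candidates and by passing through the fold-independent population expectations bridged by Lemma \ref{lm:bia_mir}.
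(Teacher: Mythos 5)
Your proposal is correct and follows essentially the same route as the paper's proof: condition on the training fold, use the positive bias $\E[FD]\le\E[\widehat{FD}]$ of the mirror estimator to bridge the CV constraint to the test fold through the shared population quantities, control all deviations with multiplicative Chernoff bounds union-bounded over the $L$ candidates and $M$ folds, and use the $D(t_{jl})/m>c_0$ condition to keep the discovery probability bounded below, which yields exactly the stated $\Delta=O\bigl(\sqrt{\tfrac{M}{\alpha n}\log\tfrac{ML}{\beta}}\bigr)$. The final aggregation over folds via the mediant inequality is also how the paper concludes.
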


\begin{remark}
There are two subtle points.
First, $L$ can not be too large. Otherwise $\mathcal{D}_{cv}(j)$ may eventually be overfitted by being used too many times for FDP estimation.
Second, the FDP estimates may be unstable if the probability of discovery $\E[D(t_{jl}) / m]$ approaches $0$.
Indeed, the mirroring method estimates FDP by $\widehat{FDP}(t_{jl}) = \frac{\widehat{FD}(t_{jl})}{D(t_{jl})}$, where both $\widehat{FD}(t_{jl})$ and $D(t_{jl})$ are i.i.d. sums of $n$ Bernoulli random variables with mean roughly $\alpha \E[D(t_{jl}) / m]$ and $\E[D(t_{jl}) / m]$.
When their means are small, the concentration property will fail. 
So we need $\E[D(t_{jl}) / m]$ to be bounded away from zero. 
Nevertheless this is required in theory but may not be used in practice. 
\end{remark}

\begin{remark} (Asymptotic FDR control under weak dependence)
Besides the i.i.d. case, \texttt{NeuralFDR} can also be extended to control FDR asymptotically under weak dependence \cite{storey2004strong,hu2010false}.
Generalizing the concept in \cite{hu2010false} from discrete groups to continuous features $\mathbf{X}$, the data are under weak dependence if the CDF of $(P_i, X_i)$ for both the null and the alternative proportion converge almost surely to their true values respectively.
The linkage disequilibrium (LD) in GWAS and the correlated genes in RNA-Seq can be addressed by such dependence structure.
In this case, if learned threshold is $c$-Lipschitz continuous for some constant $c$, \texttt{NeuralFDR} will control FDR asymptotically.
The Lipschitz continuity can be achieved, for example, by weight clipping \cite{arjovsky2017wasserstein}, i.e. clamping the weights to a bounded set after each gradient update when training the neural network. See Supp. \ref{supp_sec:FDR_WD} for details. 
\end{remark}

\paragraph{Optimal decision rule with infinite hypotheses.}
When $n=\infty$, we can recover the joint density $f_{P\mathbf{X}}(p,\mathbf{x})$. 
Based on that, the explicit form of the optimal decision rule can be obtained if we are willing to further assumer $f_1(p\vert \mathbf{x})$ is monotonically non-increasing w.r.t. $p$. 
This rule is used for the $k$-cluster initialization for \texttt{NeuralFDR} as mentioned in Sec. \ref{sec:method}. 
Now suppose we know $f_{P\mathbf{X}}(p,\mathbf{x})$.
Then $\mu(\mathbf{x})$ and $f_{P\vert \mathbf{X}}(p\vert \mathbf{x})$  can also be determined. 
Furthermore, as $f_1(p\vert \mathbf{x}) = \frac{1}{1-\pi_0(\mathbf{x})} ( f_{P\vert \mathbf{X}}(p\vert \mathbf{x}) - \pi_0(\mathbf{x}))$, once we specify $\pi_0(\mathbf{x})$, the entire model is specified.
Let $\mathcal{S}(f_{P\mathbf{X}})$ be the set of null proportions $\pi_0(\mathbf{x})$ that produces the model consistent with $f_{P\mathbf{X}}$. 
Because $f_1(p\vert \mathbf{x}) \geq 0$, we have $\forall p, \mathbf{x}, \pi_0(\mathbf{x}) \leq f_{P\vert \mathbf{X}}(p\vert \mathbf{x})$. 
This can be further simplified as $\pi_0(\mathbf{x}) \leq f_{P\vert \mathbf{X}}(1\vert \mathbf{x})$ by recalling that $f_{P\vert \mathbf{X}}(p\vert \mathbf{x})$ is monotonically decreasing w.r.t. $p$. 
Then we know 
\begin{align}
& \mathcal{S}(f_{P\mathbf{X}}) = \{ \pi_0(\mathbf{x}): \forall \mathbf{x}, \pi_0(\mathbf{x}) \leq f_{P\vert \mathbf{X}}(1\vert \mathbf{x})\}.\label{eq:defS}
\end{align}

Given $f_{P\mathbf{X}}(p,\mathbf{x})$, the model is not fully identifiable. Hence we should look for a rule $t$ that maximizes the power while controlling FDP for all elements in $\mathcal{S}(f_{P\mathbf{X}})$.
For $(P_1, \mathbf{X}_1, H_1) \sim (f_{P\mathbf{X}}, \pi_0, f_1)$ following \eqref{eq:model}, the probability of discovery and the probability of false discovery are $P_D(t, f_{P\mathbf{X}})=\P(P_1 \leq t(\mathbf{X}_1))$, $P_{FD}(t, f_{P\mathbf{X}}, \pi_0)=\P(P_1 \leq t(\mathbf{X}_1), H_1=0)$.
Then the FDP is $FDP(t,f_{P\mathbf{X}}, \pi_0) = \frac{P_{FD}(t, f_{P\mathbf{X}}, \pi_0)}{P_D(t, f_{P\mathbf{X}})}$.
In this limiting case, all quantities are deterministic and FDP coincides with FDR. 
Given that the FDP is controlled, maximizing the power is equivalent to maximizing the probability of discovery. Then we have the following minimax problem:
\begin{align}\label{eq:opt}
\max_{t} \min_{ \pi_0 \in\mathcal{S}(f_{P\mathbf{X}})} P_D(t, f_{P\mathbf{X}}) \quad s.t.\quad \max_{\pi_0\in \mathcal{S}(f_{P\mathbf{X}})} FDP(t,f_{P\mathbf{X}}, \pi_0) \leq \alpha,
\end{align}
where $\mathcal{S}(f_{P\mathbf{X}})$ is the set of possible null proportions consistent with $f_{P\mathbf{X}}$, as defined in \eqref{eq:defS}.
\begin{theorem}\label{thrm:opt_cond}
Fixing $f_{P\mathbf{X}}$ and let $\pi_0^*(\mathbf{x}) = f_{P\vert \mathbf{X}}(1\vert \mathbf{x})$. If $f_1(p\vert \mathbf{x})$ is monotonically non-increasing w.r.t. $p$, the solution to problem \eqref{eq:opt}, $t^*(\mathbf{x})$, satisfies
\begin{align}\label{eq:opt_cond}
1.\quad\frac{f_{P\mathbf{X}}(1, \mathbf{x})}{f_{P\mathbf{X}}(t^*(\mathbf{x}), \mathbf{x})}=const,~almost~surely~w.r.t.~\mu(\mathbf{x}) \quad \quad 2.\quad FDR(t^*,f_{P\mathbf{X}}, \pi_0^*)= \alpha.
\end{align}
\end{theorem}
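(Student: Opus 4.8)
The plan is to reduce the minimax problem \eqref{eq:opt} to an ordinary constrained maximization, then solve that maximization by a Lagrangian / calculus-of-variations argument, reading off optimality condition 1 from the stationarity equation and condition 2 from complementary slackness.

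First I would dispose of the two inner optimizations over $\pi_0$. Since $P_D(t, f_{P\mathbf{X}}) = \P(P_1 \leq t(\mathbf{X}_1))$ does not involve $\pi_0$ at all, the inner minimum in the objective is simply $P_D(t,f_{P\mathbf{X}})$. For the constraint, writing everything as integrals over $\mathbf{x}$ through $f_{P\mathbf{X}}(p,\mathbf{x}) = \mu(\mathbf{x}) f_{P\vert\mathbf{X}}(p\vert\mathbf{x})$ and using that the null p-value is $\text{Unif}(0,1)$, one obtains
\begin{align}
P_D(t) = \int \mu(\mathbf{x})\, F_{P\vert\mathbf{X}}(t(\mathbf{x})\vert\mathbf{x})\, d\mathbf{x}, \qquad P_{FD}(t,\pi_0) = \int \mu(\mathbf{x})\,\pi_0(\mathbf{x})\, t(\mathbf{x})\, d\mathbf{x},
\end{align}
where $F_{P\vert\mathbf{X}}(\cdot\vert\mathbf{x})$ is the conditional CDF of the p-value. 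Because $P_{FD}$ is pointwise increasing in $\pi_0$ while $P_D$ is independent of $\pi_0$, the ratio $FDP = P_{FD}/P_D$ is maximized over $\mathcal{S}(f_{P\mathbf{X}})$ at the largest admissible null proportion, namely $\pi_0^*(\mathbf{x}) = f_{P\vert\mathbf{X}}(1\vert\mathbf{x})$ (this is precisely where the monotonicity of $f_1$, hence of $f_{P\vert\mathbf{X}}(\cdot\vert\mathbf{x})$, was used to characterize $\mathcal{S}$ in \eqref{eq:defS}). Thus \eqref{eq:opt} collapses to $\max_t P_D(t)$ subject to $FDP(t,f_{P\mathbf{X}},\pi_0^*) \leq \alpha$.

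Next I would exploit the monotonicity assumption a second time: since $f_{P\vert\mathbf{X}}(\cdot\vert\mathbf{x})$ is non-increasing, $F_{P\vert\mathbf{X}}(\cdot\vert\mathbf{x})$ is concave, so $P_D$ is a concave functional of $t$ and the feasible set $\{t : \int \mu(\mathbf{x})[\alpha F_{P\vert\mathbf{X}}(t(\mathbf{x})\vert\mathbf{x}) - \pi_0^*(\mathbf{x})\, t(\mathbf{x})]\, d\mathbf{x} \geq 0\}$ is convex, which makes the KKT conditions both necessary and sufficient. I would then form the Lagrangian with multiplier $\lambda \geq 0$ on this rewritten constraint and take the pointwise variation in $t(\mathbf{x})$, using $\partial_t F_{P\vert\mathbf{X}}(t\vert\mathbf{x}) = f_{P\vert\mathbf{X}}(t\vert\mathbf{x})$. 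On the region where $t^*(\mathbf{x}) \in (0,1)$, stationarity gives
\begin{align}
(1 + \lambda\alpha)\, f_{P\vert\mathbf{X}}(t^*(\mathbf{x})\vert\mathbf{x}) = \lambda\, \pi_0^*(\mathbf{x}) = \lambda\, f_{P\vert\mathbf{X}}(1\vert\mathbf{x}),
\end{align}
so that $f_{P\vert\mathbf{X}}(1\vert\mathbf{x}) / f_{P\vert\mathbf{X}}(t^*(\mathbf{x})\vert\mathbf{x}) = (1+\lambda\alpha)/\lambda$ is independent of $\mathbf{x}$. Multiplying numerator and denominator by $\mu(\mathbf{x})$ turns this into the claimed identity $f_{P\mathbf{X}}(1,\mathbf{x}) / f_{P\mathbf{X}}(t^*(\mathbf{x}),\mathbf{x}) = const$ almost surely w.r.t.\ $\mu$, which is condition 1. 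For condition 2 I would argue the constraint must bind: because $P_D$ is strictly increasing in each $t(\mathbf{x})$, any feasible rule with $FDP < \alpha$ can be improved by raising the threshold slightly (staying feasible by continuity), so $\lambda > 0$ and complementary slackness forces $FDP(t^*,f_{P\mathbf{X}},\pi_0^*) = \alpha$; since FDP and FDR coincide in the deterministic $n=\infty$ limit, this gives $FDR(t^*,f_{P\mathbf{X}},\pi_0^*) = \alpha$.

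The main obstacle is making the variational step rigorous over the infinite-dimensional class of measurable thresholds, and in particular handling the boundary $t(\mathbf{x}) \in [0,1]$. The interior stationarity equation only admits a solution where the target density level $\frac{\lambda}{1+\lambda\alpha} f_{P\vert\mathbf{X}}(1\vert\mathbf{x})$ lies in the range of $f_{P\vert\mathbf{X}}(\cdot\vert\mathbf{x})$; elsewhere the corner $t^*(\mathbf{x}) = 0$ is selected, and the constant-ratio identity should be read as holding on the effective support where discoveries actually occur. I would also need to pin down the sign and magnitude of $\lambda$ — a quick check shows feasibility of an interior threshold requires $\lambda \geq 1/(1-\alpha) > 0$, consistent with a binding constraint — and to invoke the concavity established above to certify that the KKT point is the global optimum rather than merely a stationary one.
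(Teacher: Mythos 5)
Your proposal is correct in substance but follows a genuinely different route from the paper's. You agree with the paper on the first step (the inner optimizations over $\pi_0$ collapse because $P_D$ does not depend on $\pi_0$ and $P_{FD}$ is pointwise increasing in it, so the worst case is $\pi_0^*(\mathbf{x})=f_{P\vert\mathbf{X}}(1\vert\mathbf{x})$), but from there you run a Lagrangian/KKT argument, whereas the paper proves both conditions by direct contradiction: condition 2 by scaling a slack rule $t\mapsto\gamma t$ with $\gamma>1$ to gain discoveries while staying feasible, and condition 1 by an exchange argument --- if the ratio differs on two positive-measure sets $\mathcal{X}_1,\mathcal{X}_2$, raise the threshold by $\epsilon_1$ on one and lower it by $\epsilon_2$ on the other so that $P_D$ is unchanged but $P_{FD}$ strictly drops, then re-inflate via condition 2. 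The paper's perturbation argument is more elementary (no differentiability of the functional, no constraint qualification, and it naturally produces the ``almost surely w.r.t.\ $\mu$'' phrasing via positive-measure sets), though it too quietly needs $t(\mathbf{x})$ bounded away from $0$ on $\mathcal{X}_2$ to lower the threshold there. Your KKT route buys something the paper does not state: the explicit value of the constant, $(1+\lambda\alpha)/\lambda$ with $\lambda\geq 1/(1-\alpha)$, and --- via your observation that monotonicity of $f_{P\vert\mathbf{X}}(\cdot\vert\mathbf{x})$ makes $P_D$ concave and the feasible set convex --- sufficiency of the two conditions for global optimality, not just necessity. The caveats you flag are the real ones (necessity of KKT requires a Slater-type condition, and the stationarity equation must be read with the corner $t^*(\mathbf{x})=0$ where the target density level is unattainable); with those handled your argument closes.
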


\begin{remark}
To compute the optimal rule $t^*$ by the conditions \eqref{eq:opt_cond}, consider any $t$ that satisfies (\ref{eq:opt_cond}.1). 
According to (\ref{eq:opt_cond}.1), once we specify the value of $t(\mathbf{x})$ at any location $\mathbf{x}$, say $t(0)$, the entire function is determined. 
Also, $FDP(t,f_{P\mathbf{X}}, \pi_0^*)$ is monotonically non-decreasing w.r.t. $t(0)$. 
These suggests the following strategy: starting with $t(0)=0$, keep increasing $t(0)$ until the corresponding FDP equals $\alpha$, which gives us the optimal threshold $t^*$. 
Similar conditions are also mentioned in \cite{lei2016adapt,ignatiadis2016data}.
\end{remark}

\section{Discussion} We proposed \texttt{NeuralFDR}, an end-to-end algorithm to the learn discovery threshold from hypothesis features. 
We showed that the algorithm controls FDR and makes more discoveries on synthetic and real datasets with multi-dimensional features. 
While the results are promising, there are also a few challenges. 
First, we notice that \texttt{NeuralFDR} performs better when both the number of hypotheses and the alternative proportion are large. 
Indeed, in order to have large gradients for the optimization, we need a lot of elements at the decision boundary $t(\mathbf{x})$ and the mirroring boundary $1-t(\mathbf{x})$. 
It is important to improve the performance of \texttt{NeuralFDR} on small datasets with small alternative proportion.
Second, we found that a $10$-layer MLP performed well to model the decision threshold and that shallower networks performed more poorly. 
A better understanding of which network architectures optimally capture signal in the data is also an important question.


\bibliographystyle{plain}
\bibliography{ref}

\clearpage
\begin{center}
\textbf{\large Supplemental Materials }
\end{center}
\setcounter{section}{0}

\section{More Discussions on the Empirical Results \label{sec:S_emp}}
In this section, we present more figures of the thresholds learned by \texttt{NeuralFDR} and IHW. 

Fig. \ref{fig:2dgaussian} (a-c) shows the alternative proportion, NeuralFDR's learned threshold, and IHW's learned threshold for the 2D GM simulated data in Sec. \ref{sec:emp}. 
We can see the alternative proportion is well recovered by NeuralFDR. 
To some extent, IHW also recovers the structure but not with high resolution because its threshold is limited to have a constant threshold for each group.
This causes a loss in resolution in informative directions. 

Fig. \ref{fig:2dgaussian} (e,d) shows the learned threshold for the GTEx-2D experiment, where we recall for this experiment, the features distance (GTEx-dist) and expression level (GTEx-exp) are used. 
We can see that \texttt{NeuralFDR} captures the structure that the alternative proportion is large when the distance is small and when the expression level is small. 
This matches the biological explanation as illustrated in Sec \ref{sec:emp}.
However, IHW does not capture such structure very well. 

Fig. \ref{fig:2dgaussian} (f) shows the learned threshold for the GTEx-PhastCons experiment. The threshold is higher for more conserved regions but the difference is not very significant, showing that this covariate contains less information than distance (GTEx-dist) and expression (GTEx-exp). This is consistent with the observation that both IHW and \texttt{NeuralFDR} make fewer discoveries with PhastCons score than with distance or expression.

\begin{figure}[htb]
\centering
\subfigure[]{\includegraphics[width=0.32\linewidth]{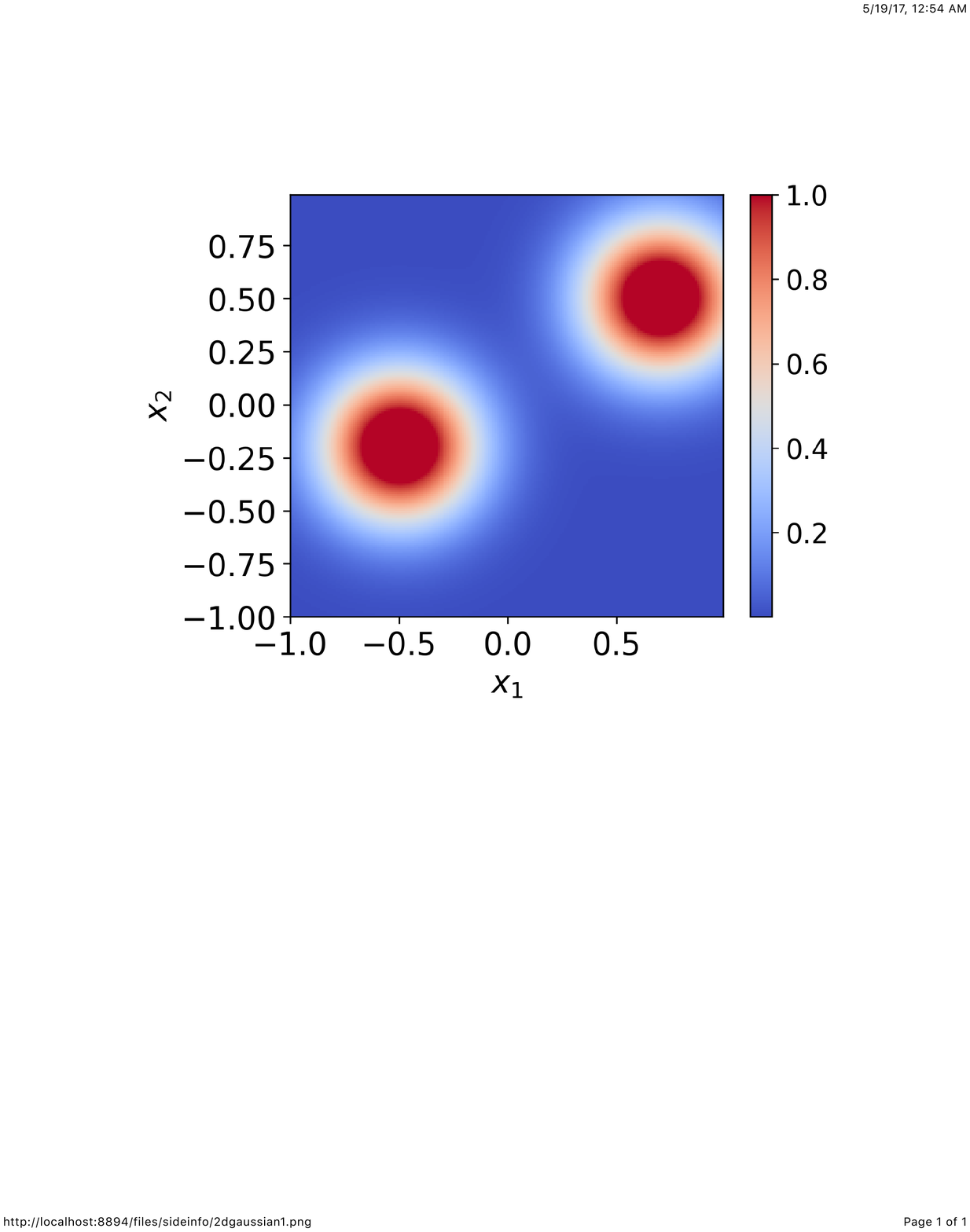}}
\subfigure[]{\includegraphics[width=0.32\linewidth]{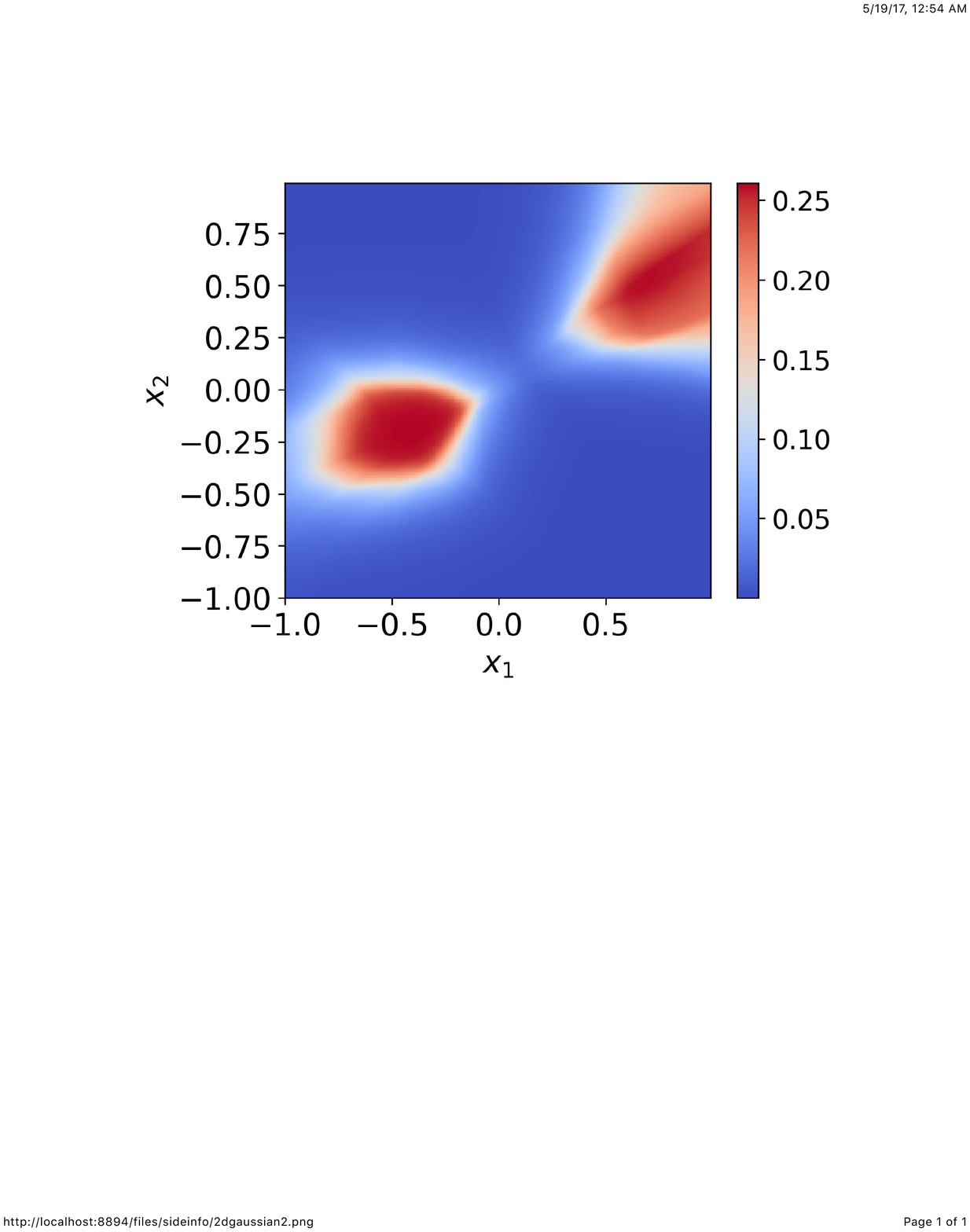}}
\subfigure[]{\includegraphics[width=0.32\linewidth]{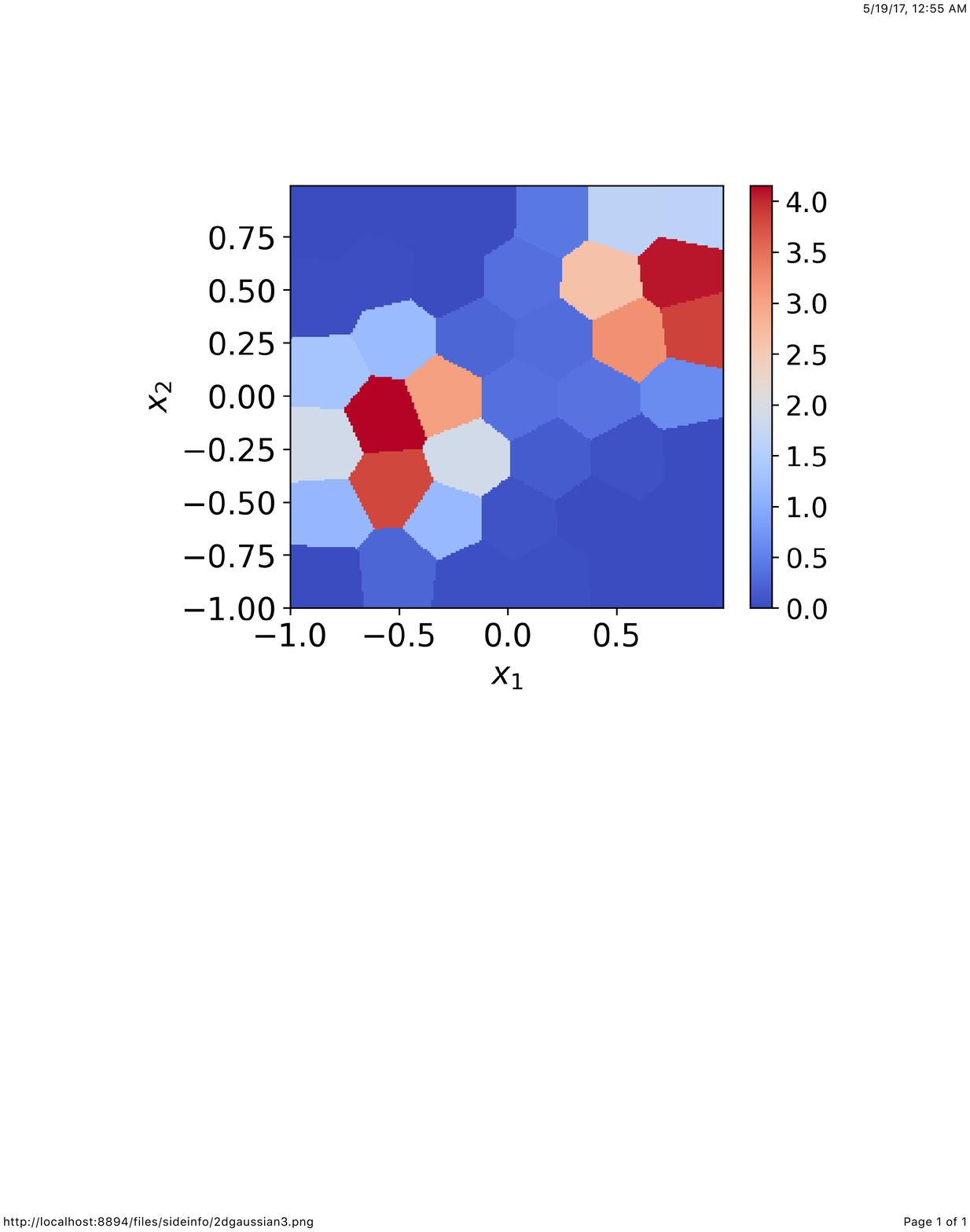}} \\
\subfigure[]{\includegraphics[width=0.33\linewidth]{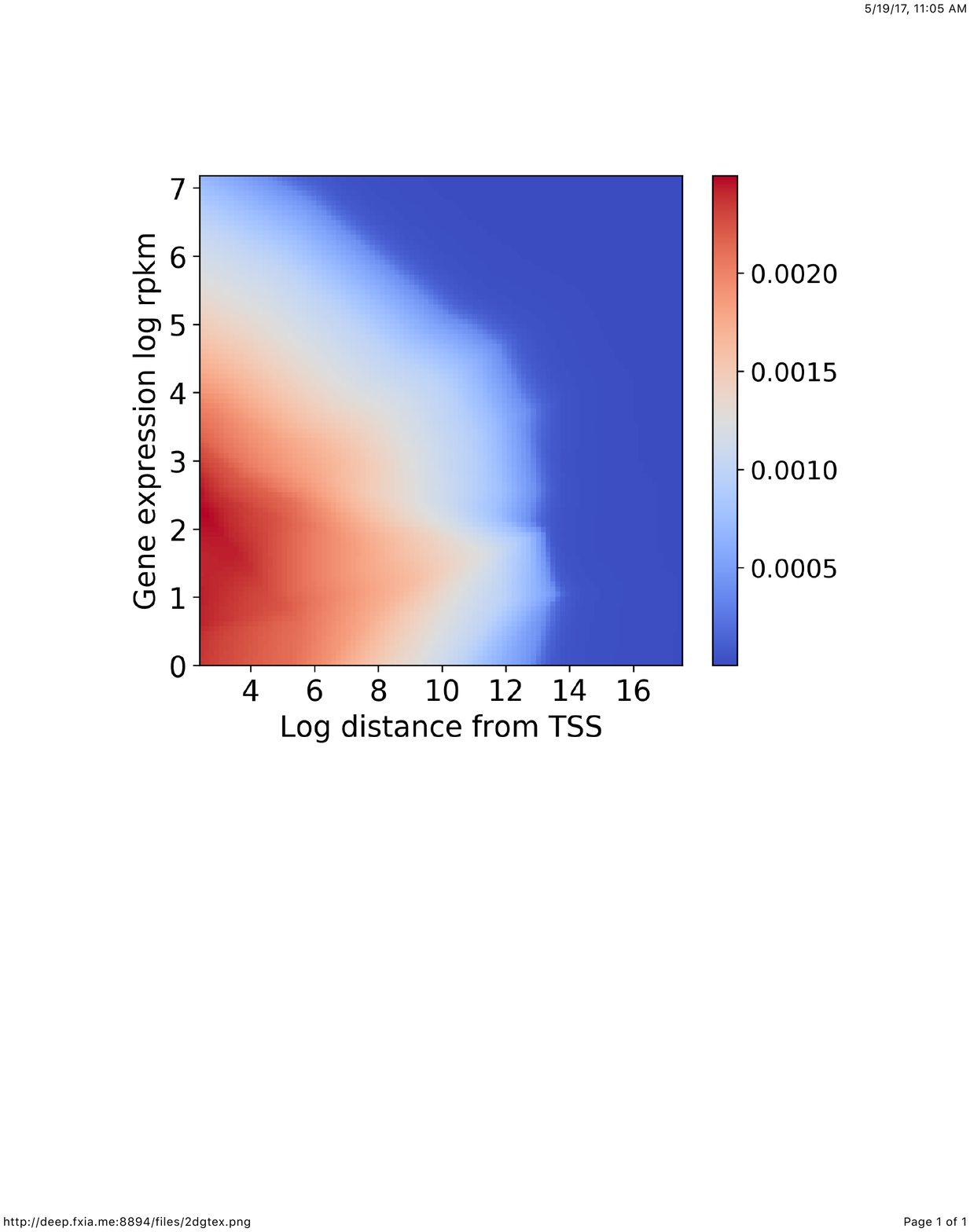}}
\subfigure[]{\includegraphics[width=0.3\linewidth]{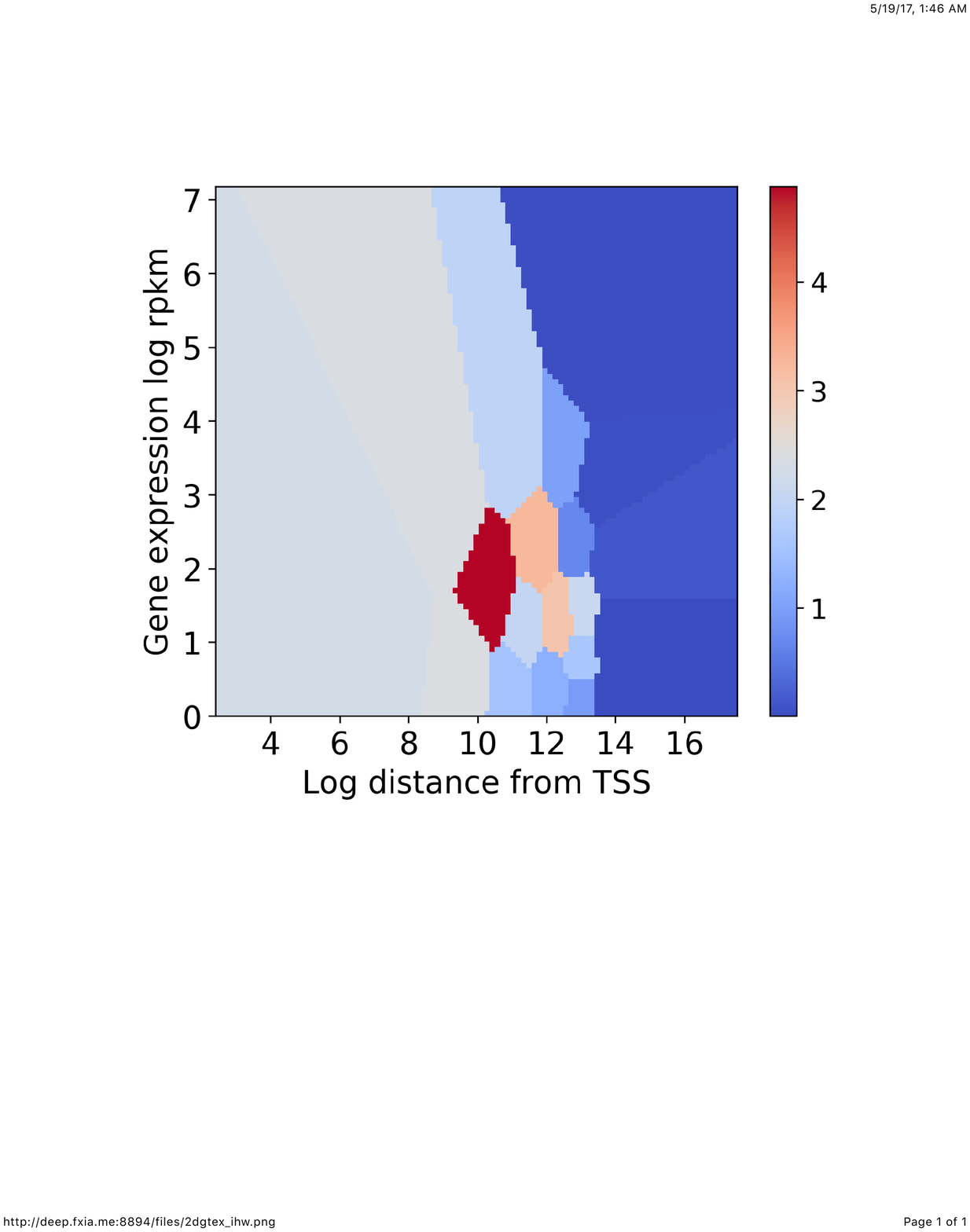}}
\subfigure[]{\includegraphics[width=0.33\linewidth]{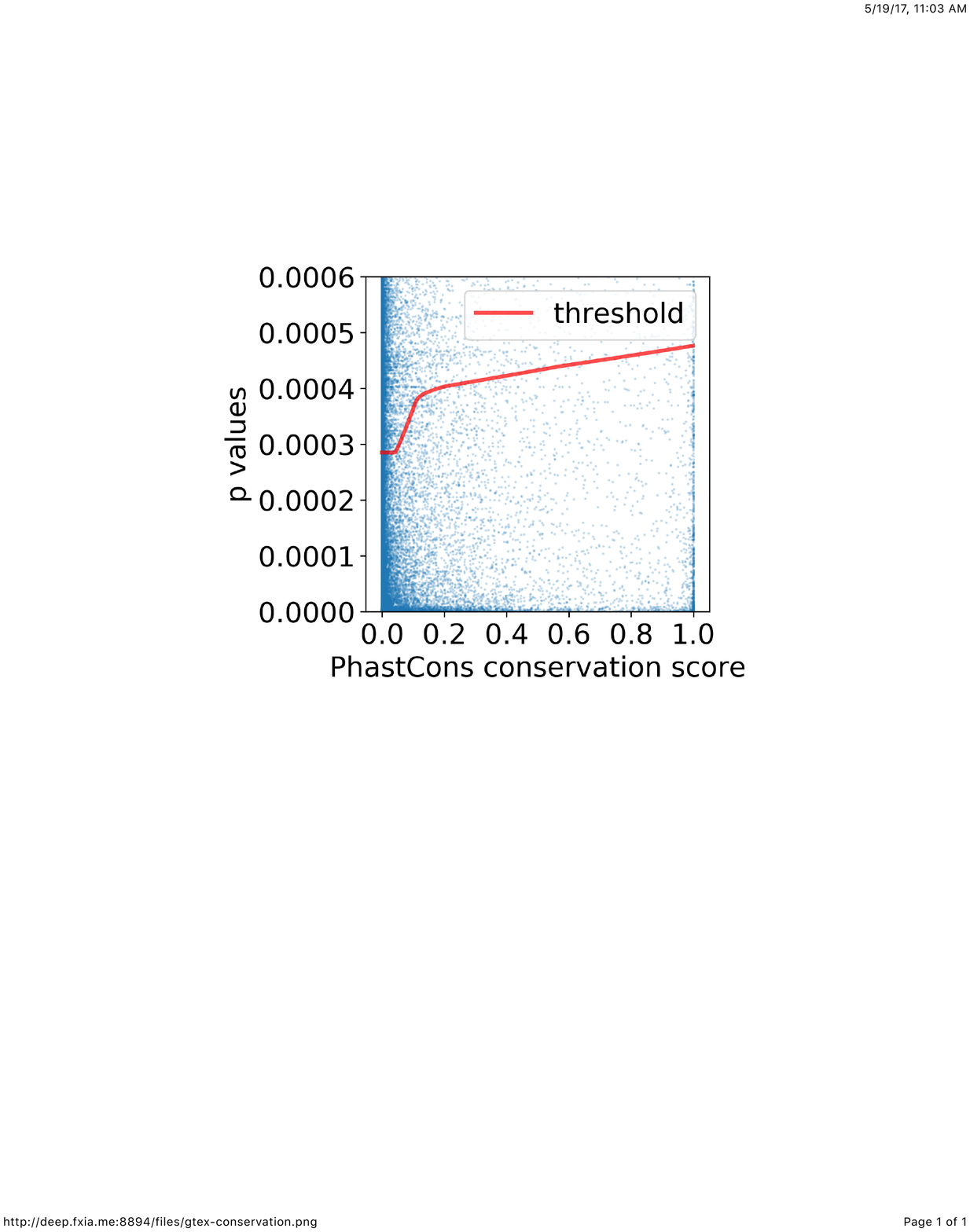}}
\caption{{\bf (a-c)} Results for 2DGM: (a) the alternative proportion for 2Dslope; (b) NeuralFDR's learned threshold; (c) IHW's learned weights.
{{\bf (d-e)} Results for GTEx-2D}: (d) NeuralFDR's learned threshold;(e) IHW's learned weights. (f) NeuralFDR's learned threshold for GTEx-PhastCons. 
\label{fig:2dgaussian}
}
\end{figure}

\begin{figure}[htb]
\centering
\includegraphics[width=0.4\linewidth]{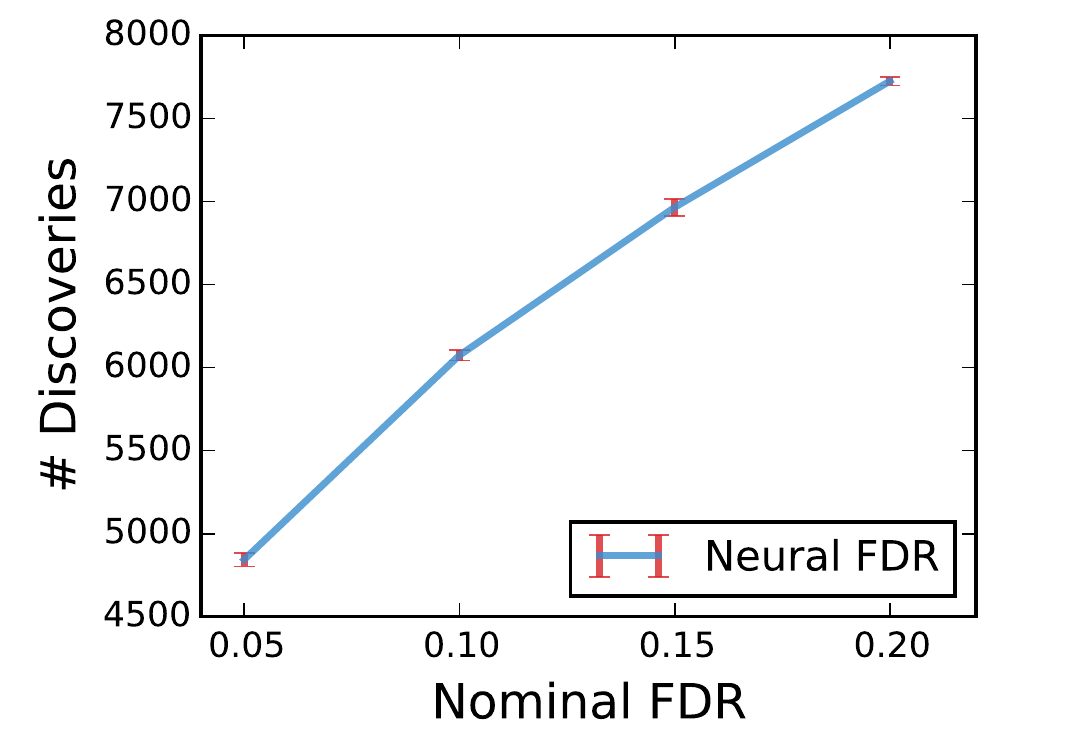}
\label{fig:stability}
\caption{Resutls of parallel runs for airway dataset, and it demonstrates the variation aross runs is small}
\end{figure}

As \texttt{NeuralFDR} uses neural network to do functional estimation, it has some randomness across mutiple runs. For example, the network could converge to bad local minimal. However, we show that \texttt{NeuralFDR} is stable across multiple runs. Fig. \ref{fig:stability} shows the number of discoveries in Airway dataset in 10 parallel runs for each nominal FDR. The errorbar denotes standard deviation, i.e. $68.3\%$ confidence interval. The coefficient of variation (CV) for each nominal FDR is smaller than $1\%$ across experiments.

\section{Implementation and Training Details \label{sec:S_imp}}

{\bf Objective function.} 
We solve the constrained optimization problem \ref{eq:opt_pblm} by the penalty method. We solve this optimization problem:

\begin{align} \label{eq:opt_pblm2}
\text{maximize}_{\bm{\theta}} \sum_{i} \tilde{D}(t(\bm{\theta})) - \lambda_1 \max \left\{ \widetilde{FD} (\gamma_i t^*_i (\bm{\theta})) - \alpha \tilde{D}(t(\bm{\theta}))  , 0 \right\}.
\end{align}

To avoid using step function, we used sigmoid to approximate the counting.
Denote the sigmoid function as $\sigma$.
We define $\tilde{D}$ and $\widetilde{FD}$ to be the following.
\begin{align}
    \tilde{D}(t(\bm(\theta))) &= \sum_j \sigma(\lambda_2 (t(\bm(\theta); x_j) - p_j) ) \\
    \widetilde{FD}(t(\bm(\theta))) &= \sum_j \sigma(\lambda_2 ( p_j  - (1-t(\bm(\theta); x_j)))) 
\end{align}

In cross validation process, we don't use approximated version of $D$ and $FD$. We use the actual number of points below the threshold and above the mirrored threshold as $D$ and $FD$.

{\bf Initialization.} 
As the optimization problem is highly non-convex, a good initialization is crucial for training. We used a smoothed version of $k$-mean clustering for initialization. The data is clustered into $k$ clusters using $k$-mean clustering based on the hypothesis features. An optimal threshold for each cluster $t_{opt,k}$ is calculated following Theorem \ref{thrm:opt_cond}. For each hypothesis, the initial value of the threshold is set to be 
\begin{align}
t_{init,i} = \sum_{j=1}^k  \frac{\exp(-\lambda_3||x_i - c_j||^2)}{\sum_{r=1}^n\exp(-\lambda_3||x_i - c_r||^2)} t_{opt,j}
\end{align}
where $c_j$ is the center for cluster $j$.

{\bf Network architecture.}
We used a 10-layer of MLP, each layer has 10 nodes. For activation function, we used LeakyReLU with a slope of 0.2. In the output layer, we use a scaled version of Sigmoid function to make sure the output is in $(0,0.5)$.

{\bf Implementation and Training.}
The algorithm is implemented in Python and the MLP is implemented using PyTorch. The optimization is solved using adaptive stochastic gradient method Adagrad \cite{duchi2011adaptive}. 

For all the experiments, we split the data equally into $M=3$ folds for cross validation. The learning rate is set to be $0.01.$ Because the optimization is driven by density, we use a large batch size of $10000$. Penalty parameter $\lambda_1$ is set to $20$, $\lambda_2$ is adaptively set depending on the BH threshold for a certain dataset, $\lambda_3$ is set to be $1$. All hyper-parameters are not heavily tuned and work across datasets. Training to fitting converges at around $6000$ iterations and for optimizing the number of discoveries converges at around $12000$ iterations. The training is done on Nvidia Tesla K80 GPUs. 

{\bf Notes for GTEx dataset.}
For GTEx dataset, the whole dataset is very large, so we filtered the p-values to get only hypothesis with $p < 0.005$ or $p > 0.995$, where the second part is for mirroring estimation. We also scale the network output to operate only in $[0, 0.005]$.

\section{Asymptotic FDR Control Under Weak Dependence \label{supp_sec:FDR_WD}}
Besides the i.i.d. case, \texttt{NeuralFDR} also controls FDR asymptotically under weak dependence \cite{storey2004strong,hu2010false}.
Extending the weak dependence definition in \cite{hu2010false} from discrete groups to continuous features $\mathbf{X}$, the data are defined to be under weak dependence if the CDF of $(P_i, X_i)$ for the null and the alternative proportion converge almost surely to their true values respectively.
The linkage disequilibrium (LD) in GWAS and the correlated genes in RNA-Seq can be addressed by such dependence structure.

\begin{definition} (Weak dependence\label{def:weak_dependence})
For the data $\{(P_i,\mathbf{X}_i,H_i)\}_{i=0}^n$ with the marginal distribution described by \eqref{eq:model}, let $F_0(p,\mathbf{x})$ and $F_1(p,\mathbf{x})$ be the cumulative density function of the distributions over $(P_i, \mathbf{X}_i)$ defined as $\P(P_i\leq p,\mathbf{X}_i\leq \mathbf{x},H_i=0)$, $\P(P_i\leq p,\mathbf{X}_i\leq \mathbf{x},H_i=1)$ respectively, where the inequality for vectors are element-wise.
The data is under weak dependence if $\forall (p,\mathbf{x})$,
\begin{align*}
\frac{1}{n} \sum_{i=1}^n\I_{\{P_i\leq p,\mathbf{X}_i\leq \mathbf{x}, H_i=0\}} \overset{a.s.}{\to} F_0(p,\mathbf{x}),~~~~\sum_{i=1}^n\I_{\{P_i\leq p,\mathbf{X}_i\leq \mathbf{x}, H_i=1\}} \overset{a.s.}{\to} F_1(p,\mathbf{x}).
\end{align*}
\end{definition}

\begin{theorem} (FDP control under weak dependence\label{thrm:FDP_ctrl_WD})
Under weak dependence, \texttt{NeuralFDR} with weight clipping controls FDR asymptotically. The weight clipping refers to clamping the weights to a bounded set after each gradient update when training the neural network \cite{arjovsky2017wasserstein}. 
\end{theorem}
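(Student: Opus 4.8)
The plan is to establish a deterministic, asymptotic version of the very argument already used in the i.i.d. case (Lemma \ref{lm:bia_mir} together with Theorem \ref{thrm:FDP_ctrl}), where the weak-dependence assumption of Definition \ref{def:weak_dependence} plays the role that the law of large numbers played before, and where the Lipschitz constraint imposed by weight clipping plays the role that the finite candidate count $L$ played in controlling overfitting. Fix a compact feature domain and let $\mathcal{F}_c$ denote the class of $c$-Lipschitz thresholds $t:\mathcal{X}\to[0,\tfrac12]$ producible by the clipped network. For a \emph{fixed} $t$, I would first rewrite the three normalized counts as functionals of the empirical CDFs $F_0^{(n)},F_1^{(n)}$ appearing in Definition \ref{def:weak_dependence}:
\begin{align*}
\tfrac{1}{n}FD(t)&=\int \I_{\{p<t(\mathbf{x})\}}\,dF_0^{(n)}, \qquad \tfrac{1}{n}D(t)=\int \I_{\{p<t(\mathbf{x})\}}\,d\bigl(F_0^{(n)}+F_1^{(n)}\bigr),\\
\tfrac{1}{n}\widehat{FD}(t)&=\int \I_{\{p>1-t(\mathbf{x})\}}\,d\bigl(F_0^{(n)}+F_1^{(n)}\bigr).
\end{align*}
By weak dependence, $F_0^{(n)}\to F_0$ and $F_1^{(n)}\to F_1$ almost surely, so each displayed average converges a.s. to the same functional evaluated at the limiting $F_0,F_1$.

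Next I would invoke the symmetry of the null. Since $P\mid H=0$ is $\mathrm{Unif}(0,1)$, the limiting null contribution obeys $\int \I_{\{p<t(\mathbf{x})\}}\,dF_0=\int \I_{\{p>1-t(\mathbf{x})\}}\,dF_0$. Hence the limit of $\tfrac{1}{n}\widehat{FD}(t)$ equals the limit of $\tfrac{1}{n}FD(t)$ plus the nonnegative term $\int \I_{\{p>1-t(\mathbf{x})\}}\,dF_1$, which is exactly the asymptotic analogue of the positive bias in \eqref{eq:bia_mir}. Consequently, in the limit the mirroring estimate overestimates the true false-discovery count, so the limiting estimated FDP dominates the limiting true FDP for every fixed $t$.

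The main obstacle, and the precise reason weight clipping is required, is that the rule $\hat t$ ultimately applied is learned from the data, so pointwise convergence for each fixed $t$ does not suffice; I need the convergence above to hold \emph{uniformly} over $\mathcal{F}_c$. I would upgrade pointwise to uniform a.s. convergence by a covering argument: being a uniformly bounded, equicontinuous family, $\mathcal{F}_c$ is totally bounded in the sup-norm by Arzelà--Ascoli and admits a finite $\epsilon$-net; the weak-dependence limit applies to each net element, and for any $t$ within $\epsilon$ of a net element the indicator sets $\{p<t(\mathbf{x})\}$ differ only in a $p$-band of width $\le\epsilon$, whose limiting mass under $F_0$ is at most $\epsilon$ and under $F_1$ is controlled by the modulus of continuity of $F_1$ in $p$ (indeed tiny near $p=1$ for the mirrored sets). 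Letting $\epsilon\to0$ after $n\to\infty$ yields $\sup_{t\in\mathcal{F}_c}$ convergence of all three normalized counts. This is the step that fails without a Lipschitz bound, since an unconstrained network could place the threshold on a wildly oscillating surface that overfits the finite sample; here it also makes the training/CV split asymptotically immaterial, because $\hat t\in\mathcal{F}_c$ automatically inherits the uniform guarantee.

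Finally I would assemble the pieces as in the cross-validation argument. The selection step produces $\hat t$ with $\widehat{FDP}(\hat t)\le\alpha$ and $D(\hat t)/m\ge c_0$; the lower bound $c_0$ keeps the limiting denominator bounded away from zero, so the ratio $\widehat{FD}/D$ is continuous in the limit. Combining uniform convergence with the asymptotic overestimation gives $FDP(\hat t)\le \widehat{FDP}(\hat t)+o(1)\le \alpha+o(1)$ almost surely. Since under weak dependence the limiting FDP is deterministic it coincides with FDR, and reverse Fatou (as $FDP\in[0,1]$) yields $\limsup_n FDR\le\alpha$, the claimed asymptotic control.
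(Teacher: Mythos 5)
Your proposal is correct and follows essentially the same route as the paper's proof: weight clipping confines the learned threshold to a Lipschitz (hence compact) class, the pointwise almost-sure convergence supplied by weak dependence is upgraded to uniform convergence of the mirror estimates over that whole class, and the overestimation property of Lemma~\ref{lm:bia_mir} then yields asymptotic control independently of $L$. The only difference is in the device used for uniformity --- the paper partitions the $(p,\mathbf{x})$-space into small boxes and approximates Lipschitz thresholds by box unions, whereas you take a finite sup-norm $\epsilon$-net of the function class via Arzel\`a--Ascoli --- and your final assembly (the $c_0$ lower bound keeping the denominator away from zero, and reverse Fatou to pass from FDP to FDR) is spelled out more completely than in the paper.
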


\begin{proof} (Proof of theorem \ref{thrm:FDP_ctrl_WD}.)
Partition the space of $(p,\mathbf{x})$ into $k$ small boxes $B_1,\cdots,B_k$.
Under the weak dependence assumption Def. \ref{def:weak_dependence}, the proportion of elements in each box $B_j$, $\frac{1}{n}\sum_{i=1}^n \I_{\{(P_i,\mathbf{X}_i)\in B_j\}}$, converges uniformly to its true value, both for the CV set and the testing set. 
As $k\to \infty$, the boxes become smaller. Then for the family of Lipschitz continuous thresholds, their corresponding mirror estimates can be uniformly approximated by the proportion of elements in the boxes above the mirrored threshold.
Hence, as $k\to \infty$, the mirror estimates converge uniformly to their true values for the family of Lipschitz continuous thresholds.
Since \texttt{NeuralFDR} with weight clipping produces Lipschitz continuous thresholds, regardless of the value of $L$, the mirror estimates on the CV set and on the testing set will converge to their true values. 
Hence the difference of the mirror estimates on the CV set and on the testing set will converge to zero, giving that \texttt{NeuralFDR} controls FDR asymptotically.
\end{proof}

\begin{remark} (Lipschitz continuity)
In the i.i.d. case, we do not need Lipschitz continuity because for any $L$ learned thresholds based on the training data, their concentration on the CV data can be characterized by the concentration inequality and the union bound due to the i.i.d. structure. Therefore a FDR control guarantee can be established.
For the weakly dependent case, however, the convergence rate is hard to characterize. 
All we have is a point-wise almost surely convergence, with rate unknown. 
Hence, we first establish a uniform convergence for the $k$ boxes, or in other words simple functions with a fixed resolution depending on $k$, and use them to approximate the learned threshold. 
In this case, since we have no idea what the convergence rate of the $L$ learned thresholds will be like, we seek a uniform approximation of the family of learned thresholds.
Then this family should have some nice properties regarding the continuity in order for the approximation to be true, and Lipschitz continuity is one of the options.
\end{remark}

\section{Proofs of the Theoretical Results \label{sec:S_theo}}
\subsection{Proof of Lemma \ref{lm:bia_mir}}
\begin{proof} (Proof of Lemma \ref{lm:bia_mir})
$\forall \mathbf{x}$, given $\mathbf{X}_i=\mathbf{x}$ and $H_i=0$, $P_i \sim \text{Unif}(0,1)$. Then the joint distribution $f_{P \mathbf{X} H}(p, \mathbf{x}, 0)$ is also uniform w.r.t. $p$. We have  
\begin{align*}
\P((P_i,X_i) \in C(t), H_i=0) = \P((P_i,X_i) \in C^M(t), H_i=0).
\end{align*}

Then 
\begin{align*}
& \E[\widehat{FD}(t)] = \sum_{i=1}^n \P((P_i, X_i) \in C^M(t)) \\&= \sum_{i=1}^n \P((P_i, X_i) \in C^M(t), H_i=0) + \sum_{i=1}^n \P((P_i, X_i) \in C^M(t), H_i=1) \\ & 
= \sum_{i=1}^n  \P((P_i, X_i) \in C(t), H_i=0) + \sum_{i=1}^n \P((P_i, X_i) \in C^M(t), H_i=1) \\ 
&= \E[FD(t)] + \sum_{i=1}^n \P((P_i, X_i) \in C^M(t), H_i=1).
\end{align*}
\end{proof}

\subsection{Proof of Theorem \ref{thrm:FDP_ctrl}}
\begin{proof} (Proof of Theorem \ref{thrm:FDP_ctrl})
Consider fold $j$. 
Any decision rule candidate $t_{jl}$ may depend on the training set  $\mathcal{D}_{tr}(j)$ but is independent of the cross validation set $\mathcal{D}_{cv}(j)$. 
Thus assuming the first point is not in the training set, we can define
\begin{align*}
& p_{jl}= \frac{1}{m} \E[FD(t_{jl},\mathcal{D}_{cv}(j))]= \P(P_1 \leq t_{jl}(X_1), H_1=0)\\
& \overline{p}_{jl}= \frac{1}{m} \E[\widehat{FD}(t_{jl},\mathcal{D}_{cv}(j))]= \P(P_1 \geq 1-t_{jl}(X_1))\\
& q_{jl}= \frac{1}{m} \E [D(t_{jl},\mathcal{D}_{cv}(j))] = \P(P_1 \leq t_{jl}(X_1)).
\end{align*}
Notice that $p_{jl}\leq \overline{p}_{jl}$. 

Let the mirror estimate $\widehat{FD}(t_{jl},\mathcal{D}_{cv}(j))$ and the number of discoveries $D(t_{jl},\mathcal{D}_{cv}(j))$ be the quantities evaluated on $\mathcal{D}_{cv}(j)$. 
We know $\widehat{FD}(t_{jl},\mathcal{D}_{cv}(j)) \sim \text{Bin}(m,\overline{p}_{jl})$ and $D(t_{ij},\mathcal{D}_{cv}(j)) \sim \text{Bin}(m,q_{jl})$. By Lemma \ref{lm:chernoff}, 
\begin{align*}
&\P \left(  \widehat{FD}(t_{jl},\mathcal{D}_{cv}(j)) \leq (1-\delta_1) m \overline{p}_{jl} \right) < \exp -\frac{\delta_1^2 m\overline{p}_{jl}}{2},&\forall~0<\delta_1<1 \\
& \P \left(  D(t_{jl},\mathcal{D}_{cv}(j)) \geq (1+\delta_2) m q_{jl} \right) < \exp -\frac{\delta_2^2 m q_{jl}}{2+\delta_2},&\forall~\delta_2>0
\end{align*}
As $\widehat{FDP}(t_{jl},\mathcal{D}_{cv}(j)) = \frac{\widehat{FD}(t_{jl},\mathcal{D}_{cv}(j))}{D(t_{jl},\mathcal{D}_{cv}(j))}
$, we have
\begin{align*}
\P\left( \widehat{FDP}(t_{jl},\mathcal{D}_{cv}(j)) \leq \frac{1-\delta_1}{1+\delta_2} \frac{\overline{p}_{jl}}{q_{jl}} \right) < \exp -\frac{\delta_1^2 m\overline{p}_{jl}}{2} + \exp -\frac{\delta_2^2 m q_{jl}}{2+\delta_2}.
\end{align*} 
Consider the ``bad'' event rule $t_{jl}$ such that $\frac{\overline{p}_{jl}}{q_{jl}} \geq \frac{1+\delta_2}{1-\delta_1} \alpha$ or $q_{jl} \leq \frac{1}{1+\delta_2} c_0$, 
\begin{align*}
& \P\left( \widehat{FDP}(t_{jl},\mathcal{D}_{cv}(j)) \leq \alpha, \frac{D(t_{jl}, \mathcal{D}_{cv}(j))}{m} \geq c_0  \right) \\ 
& \leq \P\left( \widehat{FDP}(t_{jl},\mathcal{D}_{cv}(j)) \leq \alpha, \frac{\overline{p}_{jl}}{q_{jl}} \geq \frac{1+\delta_2}{1-\delta_1} \alpha  \right) \bigvee \P\left( \frac{D(t_{jl}, \mathcal{D}_{cv}(j))}{m} \geq c_0,q_{jl} \leq \frac{1}{1+\delta_2} c_0  \right)\\ 
&< \left( \exp -\frac{\delta_1^2 m \alpha q_{jl}}{2} + \exp -\frac{\delta_2^2 m q_{jl}}{2+\delta_2} \right) \bigvee \left(  \exp -\frac{\delta_2^2 m q_{jl}}{2+\delta_2}  \right) = \exp -\frac{\delta_1^2 m \alpha q_{jl}}{2} + \exp -\frac{\delta_2^2 m q_{jl}}{2+\delta_2},
\end{align*}
where for the second inequality we note that $\overline{p}_{jl}\geq \frac{1+\delta_2}{1-\delta_1} \alpha q_{jl}>\alpha q_{jl}$. 

Let $\underline{q} = \frac{1}{1+\delta_2} c_0$ and let $\mathcal{S} = \{t_{jl}: \frac{\overline{p}_{jl}}{q_{jl}} \geq \frac{1+\delta_2}{1-\delta_1} \alpha$ or $q_{jl} \leq \underline{q} \}$. We know that there are at most $L$ elements in $\mathcal{S}$.  
Then by the union bound,
\begin{align*}
\P\left(\exists~l\in\mathcal{S}, s.t.~~\widehat{FDP}(t_{jl},\mathcal{D}_{cv}(j)) \leq \alpha,   \frac{D(t_{jl}, \mathcal{D}_{cv}(j))}{m} \geq c_0 \right) < L \left(\exp -\frac{\delta_1^2 \alpha m  \underline{q}}{2} + \exp -\frac{\delta_2^2 m \underline{q}}{2+\delta_2} \right).
\end{align*}
Furthermore, let the $l^*$-th rule, $t_{jl^*}$, be the rule selected for testing. Note here $l^*$ is a random variable, $\widehat{FDP}(t_{jl^*}, \mathcal{D}_{cv}(j)) \leq \alpha$, and $\frac{D(t_{jl^*}, \mathcal{D}_{cv}(j))}{m} \geq c_0$. Therefore
\begin{align*}
\P(l^* \in \mathcal{S})=\P\left(\frac{\overline{p}_{jl^*}}{q_{jl*}} \geq \frac{1+\delta_2}{1-\delta_1} \alpha~~or~~q_{jl^*} \leq \underline{q} \right) < L \left(\exp -\frac{\delta_1^2 \alpha m  \underline{q}}{2} + \exp -\frac{\delta_2^2 m \underline{q}}{2+\delta_2} \right).
\end{align*}
As $p_{jl^*} < \overline{p}_{jl^*}$, we have 
\begin{align}\label{eq:FDRctrl5}
\P\left(\frac{p_{jl^*}}{q_{jl^*}} \geq \frac{1+\delta_2}{1-\delta_1} \alpha~~or~~q_{jl^*} \leq \underline{q} \right) < L \left(\exp -\frac{\delta_1^2 \alpha m  \underline{q}}{2} + \exp -\frac{\delta_2^2 m \underline{q}}{2+\delta_2} \right).
\end{align}

Now we move to the test data $\mathcal{D}_{te}(j)$. Again by Lemma \ref{lm:chernoff}, given $t_{jl^*}$,
\begin{align*}
&\P \left(  FD(t_{jl^*},\mathcal{D}_{te}(j)) \geq (1+\delta_3) m p_{jl^*} \vert t_{jl^*} \right) < \exp -\frac{\delta_3^2 m p_{jl^*}}{2+\delta_3},&\forall~\delta_3>0\\
& \P \left(  D(t^*_{j},\mathcal{D}_{te}(j)) \leq (1-\delta_4) m q_{jl^*} \vert t_{jl^*}\right) < \exp -\frac{\delta_4^2 m q_{jl^*}}{2},&\forall~0<\delta_4<1. 
\end{align*}
Then, given $t_j^*$,
\begin{align*}
\left.\P\left( FDP(t_{jl^*},\mathcal{D}_{te}(j)) \geq \frac{1+\delta_3}{1-\delta_4}\frac{p_{ij^*}}{q_{jl^*}} \right\vert t_{jl^*} \right) < \exp -\frac{\delta_3^2 mp_{jl^*}}{2+\delta_3} + \exp -\frac{\delta_4^2 m q_{jl^*}}{2}.
\end{align*}

The probability that FDP is large can be decomposed as follows:
\begin{align}\label{eq:FDRctrl1}
& \P\left( FDP(t_{jl^*},\mathcal{D}_{te}(j)) \geq \frac{1+\delta_2}{1-\delta_1} \frac{1+\delta_3}{1-\delta_4} \alpha \right) \\ 
& \leq \left.\P\left( FDP(t_{jl^*},\mathcal{D}_{te}(j)) \geq \frac{1+\delta_2}{1-\delta_1} \frac{1+\delta_3}{1-\delta_4} \alpha \right\vert \frac{p_{jl^*}}{q_{jl^*}} < \frac{1+\delta_2}{1-\delta_1} \alpha, q_{jl^*} \geq \underline{q}\right)\\&+ \P\left(\frac{p_{jl^*}}{q_{jl^*}} \geq \frac{1+\delta_2}{1-\delta_1} \alpha~~or~~q_{jl^*} \leq \underline{q} \right). 
\end{align}
For the conditional probability in the first term, we have 
\begin{align} \label{eq:FDRctrl4}
& \left.\P\left( FDP(t_{jl^*},\mathcal{D}_{te}(j)) \geq \frac{1+\delta_2}{1-\delta_1} \frac{1+\delta_3}{1-\delta_4} \alpha \right\vert \frac{p_{jl^*}}{q_{jl^*}} < \frac{1+\delta_2}{1-\delta_1} \alpha, q_{jl^*} \geq \underline{q} \right) \\
& \left. \leq \P\left( FDP(t^*_{j},\mathcal{D}_{te}(j)) \geq \frac{1+\delta_2}{1-\delta_1} \frac{1+\delta_3}{1-\delta_4} \alpha \right\vert  \frac{p_{jl^*}}{q_{jl^*}} = \frac{1+\delta_2}{1-\delta_1} \alpha, q_{jl^*} \geq \underline{q} \right) \\
& \leq \exp -\frac{\delta_3^2 \alpha m \underline{q}}{2+\delta_3} + \exp -\frac{\delta_4^2 m \underline{q}}{2}.
\end{align}
Combining \eqref{eq:FDRctrl5}, \eqref{eq:FDRctrl4}, \eqref{eq:FDRctrl1} can be written as 
\begin{align*}
& \P\left( FDP(t_{jl^*},\mathcal{D}_{te}(j)) \geq \frac{1+\delta_2}{1-\delta_1} \frac{1+\delta_3}{1-\delta_4} \alpha \right) \\
& < L \left(\exp -\frac{\delta_1^2 \alpha m  \underline{q}}{2} + \exp -\frac{\delta_2^2 m \underline{q}}{2+\delta_2} \right) + \left( \exp -\frac{\delta_3^2 \alpha m \underline{q}}{2+\delta_3} + \exp -\frac{\delta_4^2 m \underline{q}}{2}\right).
\end{align*}

Finally, by union bound over all $M$ folds,
\begin{align}\label{eq:FDRctrl2}
& \P\left(\exists~j,~FDP(t^*_{j},\mathcal{D}_{te}(j)) \geq  \frac{1+\delta_2}{1-\delta_1} \frac{1+\delta_3}{1-\delta_4} \alpha  \right) \\
&< LM \left(\exp -\frac{\delta_1^2 \alpha m  \underline{q}}{2} + \exp -\frac{\delta_2^2 m \underline{q}}{2+\delta_2} \right) + M \left( \exp -\frac{\delta_3^2 \alpha m \underline{q}}{2+\delta_3} + \exp -\frac{\delta_4^2 m \underline{q}}{2}\right),
\end{align}
for some $\delta_1, \delta_4 \in (0,1)$, $\delta_2, \delta_3 >0$. Note that \eqref{eq:FDRctrl2} also indicates that the overall FDP is smaller than $\frac{1+\delta_2}{1-\delta_1} \frac{1+\delta_3}{1-\delta_4} \alpha$.

Now let us derive an asymptotic bound when $\delta_1,\delta_2,\delta_3,\delta_4$ are close to $0$. In this case, we have $\delta_1,\delta_2,\delta_3,\delta_4 \in (0,1)$ and \eqref{eq:FDRctrl2} can be reduced to 
\begin{align}\label{eq:FDRctrl3}
& \P\left(\exists~j,~FDP(t^*_{j},\mathcal{D}_{te}(j)) \geq  \frac{1+\delta_2}{1-\delta_1} \frac{1+\delta_3}{1-\delta_4} \alpha  \right) \\
&< LM \left(\exp -\frac{\delta_1^2 \alpha m  \underline{q}}{2} + \exp -\frac{\delta_2^2 m \underline{q}}{3} \right) + M \left( \exp -\frac{\delta_3^2 \alpha m \underline{q}}{3} + \exp -\frac{\delta_4^2 m \underline{q}}{2}\right).
\end{align}
Let $\Delta = \min_j \delta_j$. Then $\frac{1+\delta_2}{1-\delta_1} \frac{1+\delta_3}{1-\delta_4}-1 = O(\Delta)$. 

For some $\beta >0$, let the four terms in \eqref{eq:FDRctrl3} be equal to $\frac{\beta}{4}$ so that the overall probability is $\beta$. This gives
\begin{align*}
\delta_1 = \sqrt{\frac{2}{\alpha m \underline{q}} \log \frac{4ML}{\beta}},~~~~
\delta_2 = \sqrt{\frac{3}{ m \underline{q}} \log \frac{4ML}{\beta}},~~~~
\delta_3 = \sqrt{\frac{3}{\alpha m\underline{q}} \log \frac{4M}{\beta}},~~~~
\delta_4 = \sqrt{\frac{2}{m\underline{q}} \log \frac{4M}{\beta}}.
\end{align*}
Thus $\Delta = \min_j \delta_j = O (\sqrt{\frac{M}{\alpha n } \log \frac{ML}{\beta}})$, where we note that the constant $\underline{q}$ is hidden inside the big $O$ term and $m=\frac{n}{M}$. This completes the proof.
\end{proof}

\subsection{Proof of Theorem \ref{thrm:opt_cond}}
\begin{proof} (Proof of Theorem \ref{thrm:opt_cond})
We first identify the worse case null proportion $\pi_0^*$. 
Consider any rule $t$. As $f_{P\mathbf{X}}$ is fixed, the probability of discovery $P_D(\gamma t,f_{P\mathbf{X}})$ is determined. 
For any two null proportions $\pi_0$ and $\pi_0'$, if $\forall \mathbf{x}, \pi_0(\mathbf{x}) \geq \pi_0'(\mathbf{x})$, the probability of false discovery $P_{FD}(t,f_{P\mathbf{X}}, \pi_0) \geq P_{FD}(t,f_{P\mathbf{X}}, \pi'_0)$, giving $FDP(t,f_{P\mathbf{X}}, \pi_0) \geq FDP(t,f_{P\mathbf{X}}, \pi'_0)$. Hence FDP is maximized when $\pi_0(\mathbf{x})$ is maximized for each point of $\mathbf{x}$. As $\forall \mathbf{x}$, $\pi_0(\mathbf{x})\leq f_{P\vert \mathbf{X}}(1\vert \mathbf{x})$ and $\pi_0^*(\mathbf{x})=f_{P\vert \mathbf{X}}(1\vert \mathbf{x})$ is attainable, we know for any rule $t$, FDP is maximized with $\pi_0^*(\mathbf{x})$. Then, problem \eqref{eq:opt} can be rewritten as 
\begin{align*}
\max_{t} P_D(t, f_{P\mathbf{X}})~~s.t.~~FDP(t,f_{P\mathbf{X}}, \pi^*_0) \leq \alpha.
\end{align*}

For condition (\ref{eq:opt_cond}.2), we prove by contradiction. Suppose $t$ is the optimal rule and $FDP(t,f_{P\mathbf{X}}, \pi^*_0)<\alpha$. Then there exists $\gamma>1$ such that $FDP(\gamma t,f_{P\mathbf{X}}, \pi^*_0) \leq \alpha$. As $P_D(\gamma t, f_{P\mathbf{X}}) > P_D( t, f_{P\mathbf{X}})$, $t$ can not be the optimal rule, giving the contradiction. 

For condition (\ref{eq:opt_cond}.1), we also prove by contradiction. Again suppose $t$ is the optimal rule where (\ref{eq:opt_cond}.1) is not met, then there exists $\mathcal{X}_1, \mathcal{X}_2 \subset \mathcal{X}$ with positive measure such that 
\begin{align*}
\frac{\int_{\mathcal{X}_1} f_{P \mathbf{X}}(1,\mathbf{x}) d\mathbf{x}}{\int_{\mathcal{X}_1} f_{P \mathbf{X}}(t(\mathbf{x}),\mathbf{x}) d\mathbf{x}}< \frac{\int_{\mathcal{X}_2} f_{P \mathbf{X}}(1,\mathbf{x}) d\mathbf{x}}{\int_{\mathcal{X}_2} f_{P \mathbf{X}}(t(\mathbf{x}),\mathbf{x}) d\mathbf{x}}. 
\end{align*}
Note that $f_{P \mathbf{X}}(p,\mathbf{x})$ is monotonically decreasing w.r.t. $p$. Then there exists $\epsilon>0$ such that for any $\epsilon_1, \epsilon_2 \in (0, \epsilon)$, 
\begin{align}\label{eq:thrm2_pf1}
\frac{\epsilon_1  \int_{\mathcal{X}_1}  f_{P \mathbf{X}}(1,\mathbf{x}) d\mathbf{x}}{\int_{\mathcal{X}_1}\int_{t(\mathbf{x})}^{t(\mathbf{x})+\epsilon_1} f_{P \mathbf{X}}(t(\mathbf{x})+\epsilon_1,\mathbf{x}) dp~ d\mathbf{x}}< \frac{\epsilon_2 \int_{\mathcal{X}_2} f_{P \mathbf{X}}(1,\mathbf{x}) d\mathbf{x}}{\int_{\mathcal{X}_2} \int_{t(\mathbf{x})-\epsilon_2}^{t(\mathbf{x})} f_{P \mathbf{X}}(t(\mathbf{x})-\epsilon_2,\mathbf{x}) dp~d\mathbf{x}}. 
\end{align}
Then we can pick $\epsilon_1, \epsilon_2 < \epsilon$ such that 
\begin{align}\label{eq:thrm2_pf2}
\int_{\mathcal{X}_1}\int_{t(\mathbf{x})}^{t(\mathbf{x})+\epsilon_1} f_{P \mathbf{X}}(p,\mathbf{x}) dp~ d\mathbf{x} = \int_{\mathcal{X}_2} \int_{t(\mathbf{x})-\epsilon_2}^{t(\mathbf{x})} f_{P \mathbf{X}}(p,\mathbf{x}) dp~d\mathbf{x} >0
\end{align}
Defining a new rule $t'(\mathbf{x})$ as 
\begin{align*}
t'(\mathbf{x}) = \left\{\begin{array}{cc}
t(\mathbf{x})+\epsilon_1, & \mathbf{x} \in \mathcal{X}_1 \\
t(\mathbf{x})-\epsilon_2, & \mathbf{x} \in \mathcal{X}_2 \\
t(\mathbf{x}),& otherwise
\end{array} \right..
\end{align*}
Then for the probability of discovery,
\begin{align*}
& P_D(t', f_{P\mathbf{X}}) = \int_{\mathcal{X}} \int_0^{t'(\mathbf{x})} f_{P \mathbf{X}}(p, \mathbf{x}) dp~d\mathbf{x} = \int_{\mathcal{X}} \int_0^{t(\mathbf{x})} f_{P \mathbf{X}}(p, \mathbf{x}) dp~d\mathbf{x} \\ &+ \int_{\mathcal{X}_1}\int_{t(\mathbf{x})}^{t(\mathbf{x})+\epsilon_1} f_{P \mathbf{X}}(p,\mathbf{x}) dp~ d\mathbf{x} - \int_{\mathcal{X}_2} \int_{t(\mathbf{x})-\epsilon_2}^{t(\mathbf{x})} f_{P \mathbf{X}}(p,\mathbf{x}) dp~d\mathbf{x}  = P_D(t, f_{P\mathbf{X}}).
\end{align*}
Moreover, from \eqref{eq:thrm2_pf1} and \eqref{eq:thrm2_pf2} we also know 
\begin{align*}
\epsilon_1  \int_{\mathcal{X}_1}  f_{P \mathbf{X}}(1,\mathbf{x}) d\mathbf{x} < \epsilon_2 \int_{\mathcal{X}_2} f_{P \mathbf{X}}(1,\mathbf{x}) d\mathbf{x}. 
\end{align*}
Then 
\begin{align*}
& P_{FD}(t', f_{P\mathbf{X}}, \pi_0^*) = \int_{\mathcal{X}} t'(\mathbf{x}) \pi_0^*(\mathbf{x}) \mu(\mathbf{x}) d\mathbf{x} = \int_{\mathcal{X}} t'(\mathbf{x}) f_{P\mathbf{X}}(1,\mathbf{x})d\mathbf{x} \\
& = \int_{\mathcal{X}} t(\mathbf{x}) f_{P\mathbf{X}}(1,\mathbf{x})d\mathbf{x} + \epsilon_1  \int_{\mathcal{X}_1}  f_{P \mathbf{X}}(1,\mathbf{x}) d\mathbf{x} - \epsilon_2 \int_{\mathcal{X}_2} f_{P \mathbf{X}}(1,\mathbf{x}) d\mathbf{x} & \\&< \int_{\mathcal{X}} t(\mathbf{x}) f_{P\mathbf{X}}(1,\mathbf{x})d\mathbf{x} = P_{FD}(t, f_{P\mathbf{X}}, \pi_0^*)
\end{align*}
Then $FDR(t', f_{P\mathbf{X}}, \pi_0^*) < FDR(t, f_{P\mathbf{X}}, \pi_0^*) = \alpha$. According to condition (\ref{eq:opt_cond}.2), $t'$ can not be the optimal rule. As $t$ and $t'$ are both feasible and have the same discovery probability, $t$ can not be the optimal rule either, giving the contradiction. 
\end{proof}

\subsection{Ancillary lemmas}
\begin{lemma}\label{lm:chernoff}
(Chernoff bound) For i.i.d. random variables $X_1,\cdots, X_n \in [0,1]$, let $X =\sum_{i=1}^n X_i$ and let $\mu = \E[X]$. Then
\begin{align*}
& \P ( X \geq (1+\delta) \mu) < \exp -\frac{\delta^2 \mu}{2+\delta},&\forall~\delta>0 \\ 
& \P ( X \leq (1-\delta) \mu) < \exp -\frac{\delta^2 \mu}{2},&\forall~0<\delta<1.
\end{align*}
\end{lemma}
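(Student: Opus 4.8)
The plan is to establish both inequalities by the Chernoff (exponential-moment) method, handling the upper and lower tails by symmetric arguments. For the upper tail I would fix a free parameter $s>0$ and apply Markov's inequality to $e^{sX}$, giving $\P(X \ge (1+\delta)\mu) \le e^{-s(1+\delta)\mu}\,\E[e^{sX}]$. Independence factors the moment generating function as $\E[e^{sX}] = \prod_{i=1}^{n} \E[e^{sX_i}]$, and this is exactly where the hypothesis $X_i \in [0,1]$ is used: convexity of $x \mapsto e^{sx}$ on $[0,1]$ yields the linearization $e^{sX_i} \le 1 - X_i + X_i e^{s}$, hence $\E[e^{sX_i}] \le 1 + \E[X_i](e^{s}-1) \le \exp\!\bigl(\E[X_i](e^{s}-1)\bigr)$. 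Taking the product and using $\mu = \sum_i \E[X_i]$ collapses everything to the single bound $\E[e^{sX}] \le \exp\!\bigl(\mu(e^{s}-1)\bigr)$.

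I would then substitute this into the Markov bound and minimize the exponent $\mu(e^{s}-1) - s(1+\delta)\mu$ over $s>0$; the minimizer is $s = \ln(1+\delta)$, which gives the classical factor $\P(X \ge (1+\delta)\mu) \le \bigl(e^{\delta}/(1+\delta)^{1+\delta}\bigr)^{\mu}$. To match the form stated in the lemma, I would finish with the elementary numerical inequality $e^{\delta}/(1+\delta)^{1+\delta} \le \exp\!\bigl(-\delta^{2}/(2+\delta)\bigr)$, valid for all $\delta>0$; after raising to the power $\mu$ this is precisely $\exp(-\delta^{2}\mu/(2+\delta))$. This numerical step reduces, after taking logarithms, to showing $(1+\delta)\ln(1+\delta) \ge \delta + \delta^{2}/(2+\delta)$, which follows from the log-series.

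The lower-tail bound runs through the same template with the sign reversed: for $s>0$ I would apply Markov to $e^{-sX}$, obtaining $\P(X \le (1-\delta)\mu) \le e^{s(1-\delta)\mu}\,\E[e^{-sX}]$, bound each factor by $\E[e^{-sX_i}] \le \exp\!\bigl(\E[X_i](e^{-s}-1)\bigr)$ in the same way, optimize at $s = -\ln(1-\delta)$ to obtain $\bigl(e^{-\delta}/(1-\delta)^{1-\delta}\bigr)^{\mu}$, and close with the companion inequality $e^{-\delta}/(1-\delta)^{1-\delta} \le \exp(-\delta^{2}/2)$ for $0<\delta<1$.

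The exponential-moment machinery is entirely mechanical once the boundedness $X_i \in [0,1]$ is used to linearize the per-term generating function, so I expect the only genuinely nonroutine pieces to be the two closing numerical reductions, each of which amounts to a one-variable comparison of $\ln(1\pm\delta)$ against a low-order rational approximation. A Bernstein-type argument, using $\mathrm{Var}(X_i) \le \E[X_i^{2}] \le \E[X_i]$ so that the total variance is at most $\mu$, gives bounds of the same shape but not the exact constants $2+\delta$ and $2$ appearing here, so the direct Chernoff optimization is the route I would present to recover the stated inequalities verbatim.
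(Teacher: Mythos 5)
Your proposal is a correct and complete derivation; note that the paper itself states this Chernoff bound as an unproven ancillary lemma, so there is no in-paper proof to compare against. Your exponential-moment argument --- Markov on $e^{\pm sX}$, the convexity linearization $e^{sx}\le 1-x+xe^{s}$ on $[0,1]$, optimization at $s=\ln(1+\delta)$ (resp.\ $s=-\ln(1-\delta)$), and the closing reductions $\ln(1+\delta)\ge 2\delta/(2+\delta)$ and $(1-\delta)\ln(1-\delta)\ge -\delta+\delta^{2}/2$ --- is exactly the canonical route to these constants, and both numerical inequalities check out (the first by comparing derivatives, the second from the log series). The only cosmetic caveat is that the lemma asserts strict inequality, which your chain delivers only because the closing numerical inequalities are strict for $\delta>0$ when $\mu>0$; this is worth a one-line remark but is not a gap.
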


\clearpage
\begin{center}
\textbf{\large More Supplemental Notes }
\end{center}
\setcounter{section}{0}

\section{Improving the Mirror Estimator}
In a discussion with Nikos Ignatiadis, he pointed out that the mirror estimator can be substituted by the following estimator
\begin{align*}
\widehat{FD} (t) = \sum_{i=1}^n t(\mathbf{X}_i).
\end{align*}

This estimator also overestimates the expected value of the false discoveries. In addition, the randomness of this estimator is over only $\mathbf{X}_i$. This should potentially make the estimate more stable. 

The overestimation property can be shown as follows.

\begin{align*}
& \frac{1}{n}\E[\widehat{FD} (t)] \overset{\mathbf{X}_i~i.i.d.}{=}  \E [ t(\mathbf{X}_1) ] \overset{tower~property}{=}  \E [ \E [t(\mathbf{x}_1)\vert \mathbf{X}_1=\mathbf{x}_1]  ] \\ & \overset{null~distribution~uniform}{=} \E [ \E [\P[P_1 \leq t(\mathbf{x}_1) \vert H_1=0] \vert \mathbf{X}_1=\mathbf{x}_1]  ] 
\\& = \E \left[ \E \left[ \frac{\P(P_1 \leq t(\mathbf{x}_1), H_1=0)}{ \P(H_1=0) } \middle| \mathbf{X}_1=\mathbf{x}_1\right]  \right] 
\\& \geq \E \left[ \E \left[ \P(P_1 \leq t(\mathbf{x}_1), H_1=0) \middle| \mathbf{X}_1=\mathbf{x}_1\right]  \right] = \P(P_1 \leq t(\mathbf{X}_1), H_1=0) = \frac{1}{n} \E[FD(t)].
\end{align*}





\end{document}